\newcommand{\bra}[1]{\langle#1|}
\newcommand{\ket}[1]{|#1\rangle}
\newcommand{\ketbra}[2]{|#1\rangle\!\langle#2|}
\newcommand{\pt}{^{\Gamma}}
\newcommand{\eref}[1]{(\ref{#1})}
\def\tr{\mathinner{\mathrm{Tr}}}
\def\rank{\mathinner{\mathrm{rank}}}
\def\ppt{\mathinner{\mathrm{PPT}}}
\def\ox{\otimes}
\def\1{\openone}
\def\L{\left}
\def\R{\right}
\def\dsum{\displaystyle\sum}
\def\PPTto{\stackrel{PPT}{\rightarrow}}
\newtheorem{theorem}{Theorem}
\newtheorem{lemma}[theorem]{Lemma}
\newtheorem{remark}[theorem]{Remark}
\newtheorem{proposition}[theorem]{Proposition}
\newtheorem{corollary}[theorem]{Corollary}
\newtheorem{conjecture}[theorem]{Conjecture}
\newtheorem{example}[theorem]{Example}
\newtheorem{definition}[theorem]{Definition}
\begin{document}
\title{Pure-state transformations and catalysis under operations that completely preserve positivity of partial transpose}
\author{William Matthews}
\email{william.matthews@bris.ac.uk}
\affiliation{Department of Mathematics, University of Bristol, Bristol BS8 1TW, U.K.}
\author{Andreas Winter}
\email{a.j.winter@bris.ac.uk}
\affiliation{Department of Mathematics, University of Bristol, Bristol BS8 1TW, U.K.}
\affiliation{Centre for Quantum Technologies, National University of Singapore, 2 Science Drive 3, Singapore 117542}

\begin{abstract}
Motivated by the desire to better understand the class of quantum operations on bipartite systems that completely preserve positivity of partial transpose (PPT operations) and its relation to the class LOCC (local operations and classical communication), we present some results on deterministic bipartite pure state transformations by PPT operations. Restricting our attention to the case where we start with a rank K maximally entangled state, we give a necessary condition for transforming it into a given pure state, which we show is also sufficient when K is two and the final state has Schmidt rank three. We show that it is sufficient for all K and all final states provided a conjecture about a certain family of semidefinite programs is true. We also demonstrate that the phenomenon of catalysis can occur under PPT operations and that, unlike LOCC catalysis, a maximally entangled state can be a catalyst. Finally, we give a necessary and sufficient condition for the possibility of transforming a rank K maximally entangled state to an arbitrary pure state by PPT operations assisted by some maximally entangled catalyst.
\end{abstract}

\date{28 January 2008}

\maketitle

\section{Introduction}
In the theory of quantum entanglement the familiar `distant labs' scenario motivates the restriction of allowed operations to those that can be carried out by local operations and classical communication (LOCC). Since no entangled state can be created from an unentangled one by LOCC, we regard entanglement as a resource in this context. In the case of the bipartite pure states, the picture of how this resource can be quantified and how it can be transformed (by LOCC) is quite well developed: For bipartite pure states the entanglement of distillation, $E_d$, and the entanglement cost, $E_c$, have been shown to both be equal to the entropy of entanglement (for a review see Plenio and Virmani \cite{emeas}). Since $E_d$ and $E_c$ are equal, LOCC transformation of bipartite pure state entanglement is asymptotically reversible, and the entropy of entanglement is a unique measure of the amount of entanglement available in the many copy limit.

For mixed bipartite states, entanglement transformations are not in general asymptotically reversible, and consequently, LOCC no longer induces a unique measure of the amount of entanglement (in the asymptotic regime). For instance, there are bound entangled states (which by definition have zero entanglement of distillation) which have been shown to have non-zero entanglement cost \cite{irrev}. This has motivated the search for non-trivial extensions of LOCC, with respect to which entanglement transformations are asymptotically reversible for all bipartite states. Recently, Plenio and Brand\~{a}o \cite{pb} proved that the set of operations which asymptotically cannot generate entanglement is such a set, with the regularised relative entropy of entanglement as the corresponding measure of entanglement. An open conjecture \cite{ape} is that the set of PPT operations \cite{rains} (defined later in this section) also renders bipartite entanglement transformations asymptotically reversible (in \cite{ape} it is shown that it does so for the anti-symmetric Werner state).

For exact, finite transformations, the problem of determining when a given pure bipartite state can be deterministically converted into another by LOCC is completely solved by Nielsen's majorization theorem \cite{nielsen}: The process is possible if and only if the Schmidt coefficient vector of the initial state is majorized by that of the final state.

In the present paper we investigate exact, finite transformations of pure states by PPT operations. This topic was first treated by Ishizaka and Plenio \cite{IshizakaPlenio}), where the emphasis is on conversion of LOCC-inequivalent forms of multi-partite entanglement (e.g. GHZ and W states for three parties) by PPT operations. In the present work we are only concerned with bipartite states.

Here, we introduce some conventions that will be used throughout the rest of the paper. Logarithms are always taken to base two. If the deterministic transformation of a state $\rho$ into $\rho'$ can be accomplished by the class of operations $OP$ then we write $\rho \stackrel{OP}{\rightarrow} \rho'$; if it cannot be then we write $\rho \stackrel{OP}{\nrightarrow} \rho'$.  In discussing the transformation of bipartite pure states by any class of operations which contains LOCC, we need only consider the Schmidt coefficients of the states since states with the same Schmidt coefficients are equivalent up to local unitary transformations (which are obviously contained in LOCC). Since we are only concerned here with classes that include LOCC, we will use the state

\begin{equation}
	\rho_{\lambda} = \sum_{i=1}^d\sum_{j=1}^d \sqrt{\lambda_{i}\lambda_{j}} \ket{ii}\bra{jj}
\end{equation}

(where $\ket{ij} = \ket{i}_A \ox \ket{j}_B$ for orthonormal bases $\{\ket{i}_A\}$, $\{\ket{i}_B\}$ for Alice and Bob's quotients of the bipartite Hilbert space) as a representative of all states with Schmidt coefficient vector $\lambda$ without loss of generality. We use $\lambda^{\uparrow}$ ($\lambda^{\downarrow}$) to denote the vector obtained by putting the components of $\lambda$ in non-decreasing (non-increasing) order.

Using this notation, Nielsen's theorem is
\begin{equation}
	\rho_{\lambda} \stackrel{LOCC}{\rightarrow} \rho_{\mu} \iff \lambda \prec \mu,
\end{equation}
where the majorization relation is defined on vectors in $\mathbb{R}^d$ whose components sum to one by
\begin{equation}
	\lambda \prec \mu \iff \sum_{i=1}^{j}\lambda_i^{\downarrow} \leq \sum_{i=1}^{j}\mu_i^{\downarrow}, \textrm{ for all } j \in \{1,...,d\}.
\end{equation}
We shall use $\Phi_K$ to denote a maximally entangled state of rank $K$ (where we assume that $K \geq 2$).

We will use the symbol $X\pt$ to denote the partial transpose an operator $X$. We define the (linear) partial transpose map by
\begin{equation}
	\ketbra{ij}{kl}\pt = \ketbra{il}{kj}.
\end{equation}
Clearly this is basis dependent, but the eigenvalues (and hence the positivity) of the partial transpose of an operator does not depend on this basis choice. For convenience we will use the same basis that we are using as our representative Schmidt basis.

The set of PPT operations on a bipartite system is the set of completely positive trace-preserving (CPTP) maps $\mathcal{L}$ such that the composition $\Gamma \circ \mathcal{L} \circ\Gamma$ is also completely positive, where $\Gamma$ denotes the partial transpose map $\rho \to \rho\pt$. Equivalently, PPT maps are CPTP maps that \emph{completely} preserve the PPT property of states in the same sense that completely positive maps \emph{completely} preserve non-negativity of states: Any extension of a PPT map on a system Q onto a larger system QR where we apply the original map on Q and the identity map on R is PPT preserving (it maps PPT states to PPT states).

We make frequent use of the R\'{e}nyi entropies: For $t \in [0,\infty]$ the \emph{R\'{e}nyi entropy at $t$} is defined by

\begin{equation}
		S_t\left(\lambda\right) = \left\{ \begin{array}{ll}
			\frac{1}{1-t}\log \left( \sum_{i=1}^{d} \lambda_i^t \right) &\textrm{ for } t \in (0,1)\cup(1,\infty), \\
			\log |\lambda| &\textrm{ for } t = 0,\\
			H(\lambda) &\textrm{ for } t = 1,\\
			-\log \lambda^{\downarrow}_1 &\textrm{ for } t = \infty,
		\end{array} \right.
\end{equation}
where $|\lambda|$ denotes the number of non-zero components of $\lambda$ (i.e. the Schmidt rank) and $H(\lambda) = -\sum_i \lambda_i\log\lambda_i$ is the Shannon entropy of $\lambda$.

In the next section we discuss some necessary conditions for general bipartite pure state transformations by PPT. After that we provide part of the theory analogous to Nielsen's theory for deterministic transformations of pure bipartite states by PPT operations. In particular, we look at the case where the initial state is maximally entangled. In section \ref{sec:frommax} we provide a necessary condition for transformations of this type, which we conjecture is also sufficient. We make use of this result in section \ref{sec:cat} to show that the phenomenon of catalysis \cite{catalysis} can occur for PPT operations and give a necessary and sufficient condition for when this can occur if both the initial state and catalyst state are maximally entangled. We conclude with some suggestions for future work, including a conjecture for extending this condition to arbitrary initial and catalyst states.

\section{General PPT pure state transformations}\label{sec:general}

Determining whether a particular pure state transformation can be carried out by a PPT operation can in fact be formulated as a semidefinite programming problem \cite{IshizakaPlenio}, the difficulty is in phrasing the relevant constraints in terms of conditions on the Schmidt coefficients of the pure states. Here we show that certain R\'{e}nyi entropies of the Schmidt coefficient vectors correspond to operationally motivated PPT monotones (allowing us to give some necessary conditions for transformations).

For completeness let us define these operational monotones for any class of operations $X$: Let $\rho$ be a density operator in $\mathcal{B}(\mathcal{H}_A\ox\mathcal{H}_B)$, where $\mathcal{B}(\mathcal{H})$ denotes the set of hermitian operators on the Hilbert space $\mathcal{H}$.
\begin{definition}
	The distillable entanglement of $\rho$ under the class of operations $X$ is defined by
	\begin{equation}
		E_{d}^{{\rm X}}( \rho ) := \sup_{(\mathcal{L}_j)_{j\in\mathbb{N}}}\L\{ \lim_{j\to\infty} \frac{\log K_j}{n_j} \bigg| \lim_{j\to\infty} \| \Phi_{K_j} - \mathcal{L}_j ( \rho^{\ox n_j} ) \|_1 = 0 \R\},
	\end{equation}
	and the exact distillable entanglement by
	\begin{equation}
		E_{dx}^{{\rm X}}( \rho ) := \lim_{j\to\infty} \max_{\mathcal{L}_j}\L\{ \frac{\log K_j}{n_j} \bigg| \| \Phi_{K_j} - \mathcal{L}_j ( \rho^{\ox n_j} ) \|_1 = 0 \R\},
	\end{equation}
	where in both cases $(\mathcal{L}_j)_{j\in\mathbb{N}}$ is a sequence of completely positive trace preserving (CPTP) maps such that each map, $\mathcal{L}_j$, takes $\mathcal{B}\L(\L(\mathcal{H}_A\ox\mathcal{H}_B\R)^{\ox n_j}\R)$ to $\mathcal{B}\L(\mathbb{C}^{\ox K_j}\ox\mathbb{C}^{\ox K_j}\R)$ and belongs to the class $X$.
\end{definition}

\begin{definition}
	The entanglement cost of a state $\rho$, we define by
	\begin{equation}
		E_{c}^{{\rm X}}( \rho ) := \inf_{(\mathcal{L}_j)_{j\in\mathbb{N}}}\L\{ \lim_{j\to\infty} \frac{\log K_j}{n_j} \bigg| \lim_{j\to\infty} \| \mathcal{L}_j (\Phi_{K_j} ) -  \rho^{\ox n_j} \|_1 = 0 \R\},
	\end{equation}
	and the exact entanglement cost by
	\begin{equation}
		E_{cx}^{{\rm X}}( \rho ) := \lim_{j\to\infty} \min_{\mathcal{L}_j}\L\{ \frac{\log K_j}{n_j} \bigg| \| \mathcal{L}_j (\Phi_{K_j} ) -  \rho^{\ox n_j} \|_1 = 0 \R\},
	\end{equation}
	where in both cases $(\mathcal{L}_j)_{j\in\mathbb{N}}$ is a sequence of CPTP maps such that each map, $\mathcal{L}_j$, takes $\mathcal{B}\L(\mathbb{C}^{\ox K_j}\ox\mathbb{C}^{\ox K_j}\R)$ to $\mathcal{B}\L(\L(\mathcal{H}_A\ox\mathcal{H}_B)\R)^{\ox n_j}\R)$ and belongs to the class $X$.
\end{definition}

\begin{proposition}
	The entanglement cost $E_{c}^{\ppt}(\rho_{\lambda})$ and distillable entanglement $E_{d}^{\ppt}(\rho_{\lambda})$ of $\rho_{\lambda}$ are both equal to $S_{1}(\lambda)$, the entropy of entanglement of the state.
\end{proposition}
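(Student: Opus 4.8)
The plan is to trap both $E_c^{\ppt}(\rho_\lambda)$ and $E_d^{\ppt}(\rho_\lambda)$ between two copies of $S_1(\lambda)$. The outer bounds come for free from the fact that every LOCC protocol is a PPT protocol. Because $E_d^{X}$ is a supremum and $E_c^{X}$ an infimum over protocols drawn from the class $X$, enlarging $X$ from LOCC to PPT can only raise the distillable entanglement and lower the entanglement cost, so that $E_d^{\ppt}(\rho_\lambda)\ge E_d^{\locc}(\rho_\lambda)$ and $E_c^{\ppt}(\rho_\lambda)\le E_c^{\locc}(\rho_\lambda)$. The known asymptotic reversibility of pure-state LOCC recalled in the introduction gives $E_d^{\locc}(\rho_\lambda)=E_c^{\locc}(\rho_\lambda)=S_1(\lambda)$, whence $E_d^{\ppt}(\rho_\lambda)\ge S_1(\lambda)\ge E_c^{\ppt}(\rho_\lambda)$.

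To close the sandwich I would establish the generic ordering $E_d^{\ppt}(\rho)\le E_c^{\ppt}(\rho)$, valid for every state, which says one cannot distill ebits faster than one pays to create them. The mechanism is composition: since $\Gamma\circ(\mathcal{L}_2\circ\mathcal{L}_1)\circ\Gamma=(\Gamma\circ\mathcal{L}_2\circ\Gamma)\circ(\Gamma\circ\mathcal{L}_1\circ\Gamma)$ is a composite of completely positive maps, the composite of two PPT maps is again PPT. Taking a near-optimal dilution sequence that prepares $\rho^{\ox n}$ from $\Phi_K$ at a rate approaching $E_c^{\ppt}(\rho)$ and a near-optimal distillation sequence that extracts $\Phi_{K'}$ from $\rho^{\ox n}$ at a rate approaching $E_d^{\ppt}(\rho)$, their composition is a single PPT protocol taking $\Phi_K$ to a state whose trace-norm distance from $\Phi_{K'}$ tends to zero, with $(\log K')/(\log K)\to E_d^{\ppt}(\rho)/E_c^{\ppt}(\rho)$.

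I would then feed this composite protocol into any entanglement measure $E$ that is (i) non-increasing under PPT operations, (ii) equal to $\log K$ on $\Phi_K$, and (iii) asymptotically continuous. Monotonicity gives $E(\text{output})\le E(\Phi_K)=\log K$, while asymptotic continuity lets me replace the output by the nearby target, $E(\text{output})\ge \log K'-o(\log K')$, so that $\log K'\le\log K+o(n)$ and hence $E_d^{\ppt}(\rho)\le E_c^{\ppt}(\rho)$. A suitable $E$ is the relative entropy of entanglement with respect to the set of PPT states (the Rains quantity): it is a PPT monotone because relative entropy is monotone under CPTP maps and PPT operations send PPT states to PPT states, it evaluates to $\log K$ on $\Phi_K$, and it is asymptotically continuous. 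Combining with the previous paragraph yields $S_1(\lambda)\le E_d^{\ppt}(\rho_\lambda)\le E_c^{\ppt}(\rho_\lambda)\le S_1(\lambda)$, forcing all three to be equal.

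The main obstacle is the asymptotic-continuity step: the definitions control only $\|\cdot\|_1\to 0$, and one must ensure that this vanishing trace-norm error translates into a genuinely sublinear $o(n)$ correction to the rate for the chosen monotone. This is exactly why the easily computed logarithmic negativity, although a PPT monotone, is not good enough here: it fails to be asymptotically continuous, and in any case it is not tight for pure states, since $E_N(\rho_\lambda)=S_{1/2}(\lambda)\ge S_1(\lambda)$ with equality only for maximally entangled or product states. Thus the sharp value $S_1(\lambda)$ genuinely relies on the LOCC reversibility result for the outer bounds together with an asymptotically continuous monotone for the inner one.
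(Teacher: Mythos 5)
Your proposal is correct, and its outer layer coincides with the paper's: both arguments obtain $E_d^{\ppt}(\rho_\lambda)\ge S_1(\lambda)\ge E_c^{\ppt}(\rho_\lambda)$ for free from $\locc\subset\ppt$ together with pure-state LOCC reversibility. The difference lies in the non-trivial direction. The paper simply cites Rains' theory of PPT distillation for the statement that PPT operations cannot asymptotically increase the number of EPR pairs, and reads off the two bounds $E_c^{\ppt}(\rho_\lambda)\ge S_1(\lambda)$ and $E_d^{\ppt}(\rho_\lambda)\le S_1(\lambda)$ separately. You instead prove the single generic inequality $E_d^{\ppt}(\rho)\le E_c^{\ppt}(\rho)$ by composing near-optimal dilution and distillation protocols (the composition of PPT maps is PPT, exactly as you compute) and feeding the composite into the relative entropy of entanglement with respect to the set of PPT states; this is a self-contained reconstruction of what the paper black-boxes, and the three properties you demand of the monotone all check out: PPT-monotonicity follows from data processing plus the fact that PPT maps send PPT states to PPT states, the normalisation $E_R^{\ppt}(\Phi_K)=\log K$ follows from $\tr(\Phi_K\sigma)\le 1/K$ for PPT $\sigma$ (measure $\{\Phi_K,\1-\Phi_K\}$ and apply monotonicity of relative entropy) together with the isotropic PPT state for the upper bound, and asymptotic continuity holds by the Donald--Horodecki argument, which applies to any convex set containing the maximally mixed state. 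Two steps you leave implicit should be written out: chaining the two protocols requires trace-norm contractivity of CPTP maps, so that the dilution error does not spoil the distillation guarantee, and the two sequences of block lengths $n_j$ must be aligned on a common subsequence before rates are compared. Structurally your route is slightly leaner---one Rains-type inequality instead of two---at the price of carrying the asymptotic-continuity machinery explicitly, which the paper avoids by citation; and your closing observation is apt, since the easily computed logarithmic negativity, with $E_N(\rho_\lambda)=S_{1/2}(\lambda)\ge S_1(\lambda)$, is a PPT monotone that fails asymptotic continuity and would give only the (here insufficient) exact-cost value $S_{1/2}$.
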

\begin{proof}
	
It is clear and well-known that by LOCC operations, both the entanglement cost and distillable entanglement of $\rho_{\lambda}$ is $S_{1}(\lambda)$. By elementary
results of Rains' theory of PPT distillation \cite{rains}, PPT operations
cannot asymptotically increase the number of EPR pairs. Hence
$$E_{c}^{\ppt}(\rho_{\lambda}) \geq S_1(\lambda),\quad
  E_{d}^{\ppt}(\rho_{\lambda}) \leq S_1(\lambda).$$
Since the opposite inequalities are trivial by LOCC $\subset$ PPT,
we conclude that
$$E_{c}^{\ppt}(\rho_{\lambda}) = E_{d}^{\ppt}(\rho_{\lambda}) = S_{1}(\lambda),$$
thus coinciding with the corresponding LOCC entanglement cost and distillable entanglement: the entropy
of entanglement.
\end{proof}

\begin{proposition}
	The exact distillable entanglement $E_{xd}^{\ppt}(\rho_{\lambda})$ of $\rho_{\lambda}$ is given by $S_{\infty}(\lambda)$.
\end{proposition}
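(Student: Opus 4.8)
The plan is to reduce this asymptotic statement to a single-shot characterisation and then tensorise. Concretely, I would first establish that for any Schmidt vector $\nu$ one has $\rho_\nu \PPTto \Phi_K$ if and only if $\nu_1^{\downarrow}\le 1/K$, and then apply this to $\rho_\lambda^{\ox n}$, whose Schmidt vector has largest component $(\lambda_1^{\downarrow})^n$. The feasibility condition then reads $K\le(\lambda_1^{\downarrow})^{-n}$, so the optimal achievable rate is $\frac1n\log\lfloor(\lambda_1^{\downarrow})^{-n}\rfloor\to-\log\lambda_1^{\downarrow}=S_{\infty}(\lambda)$, which is the claim.

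Achievability of the single-shot condition is immediate from Nielsen's theorem together with $\locc\subset\ppt$. Since the target $\Phi_K$ has the flat Schmidt vector $u_K=(1/K,\dots,1/K)$, the majorisation $\nu\prec u_K$ is equivalent to the single inequality $\nu_1^{\downarrow}\le 1/K$: if the largest component is at most $1/K$ then so are all of them, and the remaining partial-sum constraints $\sum_{i\le j}\nu_i^{\downarrow}\le j/K$ hold automatically. Hence $\nu_1^{\downarrow}\le 1/K$ already guarantees $\rho_\nu\LOCCto\Phi_K$, and a fortiori $\rho_\nu\PPTto\Phi_K$.

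The substance, and the main obstacle, is the converse: that $\rho_\nu\PPTto\Phi_K$ forces $\nu_1^{\downarrow}\le 1/K$. The difficulty is that the obvious PPT monotone, the logarithmic negativity $\log\|\rho\pt\|_1$, yields only the weaker bound $S_{1/2}(\nu)$, so I would argue instead in the Heisenberg picture. Let $\mathcal L$ be a PPT map with $\mathcal L(\rho_\nu)=\Phi_K$ and set $E:=\mathcal L^{*}(\Phi_K)$, where $\mathcal L^{*}$ is the (unital, completely positive) adjoint. Two facts drive the argument. First, $\tr(E\rho_\nu)=\tr(\Phi_K\mathcal L(\rho_\nu))=\tr\Phi_K^2=1$ together with $0\le E\le\1$ forces $\ket{\psi_\nu}$ to be an eigenvector of $E$ of eigenvalue one, so that $E\ge\ketbra{\psi_\nu}{\psi_\nu}$. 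Second, because $\mathcal L$ is PPT the map $(\Gamma\circ\mathcal L\circ\Gamma)^{*}=\Gamma\circ\mathcal L^{*}\circ\Gamma$ is again unital and completely positive, hence an operator-norm contraction; applying it to the swap operator $S$ (for which $\|S\|_{\infty}=1$) and using $S\pt=K\Phi_K$ gives $\|K E\pt\|_{\infty}=\|(\Gamma\mathcal L^{*}\Gamma)(S)\|_{\infty}\le 1$, i.e. $\|E\pt\|_{\infty}\le 1/K$. Choosing the basis so that $\ket{11}$ carries the largest Schmidt coefficient and using that the $(11,11)$ diagonal matrix element is invariant under partial transpose, I then get $\nu_1^{\downarrow}=|\braket{11}{\psi_\nu}|^2\le\bra{11}E\ket{11}=\bra{11}E\pt\ket{11}\le\|E\pt\|_{\infty}\le 1/K$, which is exactly the required bound.

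I expect the only delicate points to be routine bookkeeping: verifying that $\Gamma\circ\mathcal L\circ\Gamma$ inherits trace-preservation from $\mathcal L$ (so that its adjoint is genuinely unital) and fixing the partial-transpose conventions so that $S\pt=K\Phi_K$ and $\bra{11}E\pt\ket{11}=\bra{11}E\ket{11}$. Neither is hard. The conceptual point worth emphasising is that, although PPT strictly enlarges LOCC, for this particular task it buys nothing: the single-shot distillation condition is the same $\nu_1^{\downarrow}\le 1/K$ for both classes, so $E_{xd}^{\ppt}(\rho_\lambda)$ coincides with its easily computed LOCC value $S_{\infty}(\lambda)$.
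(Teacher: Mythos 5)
Your proof is correct, and it reaches the paper's key inequality by a genuinely more self-contained route. The paper invokes Rains' fidelity theorem as a black box: exact PPT conversion of $\rho_\lambda$ to $\Phi_K$ is possible iff there exists $F$ with $\rho_\lambda \le F \le \1$ and $-\frac{1}{K}\1 \le F\pt \le \frac{1}{K}\1$; it then writes $F = \rho_{\lambda^{\downarrow}} + A$ with $A \ge 0$ and bounds the $(11,11)$ matrix element to get $\|F\pt\|_\infty \ge \lambda^{\downarrow}_1$, hence $K \le 1/\lambda^{\downarrow}_1$, while achievability is delegated to the deterministic concentration result of Morikoshi and Koashi. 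You instead rederive from scratch precisely the witness that Rains' theorem would hand you: $E = \mathcal{L}^{*}(\Phi_K)$ satisfies $\rho_\nu \le E \le \1$ (your eigenvalue-one argument is sound: $\1 - E \ge 0$ and $\bra{\psi_\nu}(\1-E)\ket{\psi_\nu} = 0$ force $E\ket{\psi_\nu} = \ket{\psi_\nu}$) and $\|E\pt\|_\infty \le 1/K$ (your contraction argument applied to the swap, using $S\pt = K\Phi_K$, which is consistent with the paper's convention $\ketbra{ij}{kl}\pt = \ketbra{il}{kj}$); your closing step $\nu^{\downarrow}_1 \le \bra{11}E\ket{11} = \bra{11}E\pt\ket{11} \le \|E\pt\|_\infty$ is exactly the same matrix-element computation the paper performs on $F\pt$. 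The bookkeeping you flagged does check out: $(\Gamma\circ\mathcal{L}\circ\Gamma)^{*} = \Gamma\circ\mathcal{L}^{*}\circ\Gamma$ follows from $\tr(X\pt Y) = \tr(X\, Y\pt)$, it is unital because $\1\pt = \1$ and $\mathcal{L}^{*}(\1) = \1$, and positivity plus unitality give the operator-norm contraction on hermitians via $-\|X\|_\infty\1 \le X \le \|X\|_\infty\1$. Likewise your achievability via Nielsen's theorem (the majorization $\nu \prec u_K$ collapses to the single condition $\nu^{\downarrow}_1 \le 1/K$, and that condition forces at least $K$ nonzero Schmidt coefficients, so there is no rank obstruction) makes explicit what the paper only cites. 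What your version buys: independence from Rains' SDP machinery — you prove only the direction of his characterization that is actually needed, and only the one-sided consequence $\|E\pt\|_\infty \le 1/K$ — together with a sharp single-shot statement, namely that LOCC and PPT define the same exact-distillation condition $\nu^{\downarrow}_1 \le 1/K$, which the paper leaves implicit. What the paper's version buys: brevity, and by phrasing the bound as $\max\{\|F\pt\|_\infty^{-1} \mid \rho_\lambda \le F \le \1\}$ it keeps the argument inside the semidefinite-programming framework used throughout the rest of the paper.
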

\begin{proof}
We use a result of Rains~\cite{rains} on the maximum fidelity
obtainable from $\ket{\psi}$ via PPT operations, to a maximally entangled
state of Schmidt rank $K$: by this result, exact transformation (fidelity $1$)
is possible if there exists an operator $F$ with
$$\rho_{\lambda} \leq F\leq\1,\quad -\frac{1}{K}\1\leq F^\Gamma\leq \frac{1}{K}\1.$$
Thus, an upper bound to $K$ is given by
$$\max\left\{ \left\|F^\Gamma\right\|_\infty^{-1}\ \big|\  \rho_{\lambda}\leq F\leq\1 \right\}.$$
We claim that this is $1/\lambda^{\downarrow}_1$: for assume any
$F$ as above, and write it $F=\rho_{\lambda^{\downarrow}}+A$, $A\geq 0$.
\begin{equation}\begin{split}
  \left\|F^\Gamma\right\|_\infty &\geq    \tr(\ketbra{11}{11}(\rho_{\lambda^{\downarrow}}\pt + A\pt)) \\
                               &=    \lambda^{\downarrow}_1 + \tr(\ketbra{11}{11}\pt A) \\
& = \lambda^{\downarrow}_1 + \tr(\ketbra{11}{11} A) \geq \lambda^{\downarrow}_1.
\end{split}\end{equation}
Hence $\left\|F^\Gamma\right\|_\infty\geq \lambda^{\downarrow}_1$, with equality
achieved for $F=\rho_{\lambda}$.
\par
This gives $E_{xd}^{\ppt}(\rho_{\lambda}) \leq S_\infty(\lambda)$.
The opposite inequality comes from LOCC operations asymptotically
achieving this bound~\cite{morikoshi}.
\end{proof}

\begin{proposition}\label{exactPPTcost}
The entanglement cost $E_{xc}^{\ppt}(\rho_{\lambda})$ of $\rho_{\lambda}$ is $S_{1/2}(\lambda)$.
\end{proposition}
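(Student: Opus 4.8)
The plan is to recognise that the claimed value is the logarithmic negativity of $\rho_\lambda$. Since $S_{1/2}(\lambda)=2\log\sum_i\sqrt{\lambda_i}$, and the partial transpose $\rho_\lambda\pt=\sum_{ij}\sqrt{\lambda_i\lambda_j}\ketbra{ij}{ji}$ has eigenvalues $\lambda_i$ together with $\pm\sqrt{\lambda_i\lambda_j}$ for $i<j$, one has $\|\rho_\lambda\pt\|_1=\big(\sum_i\sqrt{\lambda_i}\big)^2$ and hence $S_{1/2}(\lambda)=\log\|\rho_\lambda\pt\|_1$. This quantity is additive on tensor powers, $\|(\rho_\lambda^{\ox n})\pt\|_1=\|\rho_\lambda\pt\|_1^{\,n}$, which is what lets a single-shot analysis fix the regularised rate. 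I would prove the proposition by sandwiching $E_{xc}^{\ppt}(\rho_\lambda)$ between matching converse and achievability bounds, both organised around $\|\cdot\pt\|_1$.

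For the converse I would first observe that $\|X\pt\|_1$ is non-increasing under PPT operations. If $\mathcal L$ is PPT then $\widetilde{\mathcal L}:=\Gamma\circ\mathcal L\circ\Gamma$ is completely positive and trace preserving, and $\mathcal L(\rho)\pt=\widetilde{\mathcal L}(\rho\pt)$; writing $\rho\pt=P-N$ with $P,N\geq0$ its positive and negative parts, positivity of $\widetilde{\mathcal L}$ and trace preservation give $\|\mathcal L(\rho)\pt\|_1\leq\tr\widetilde{\mathcal L}(P)+\tr\widetilde{\mathcal L}(N)=\tr P+\tr N=\|\rho\pt\|_1$. Since $\Phi_K\pt=\tfrac1K\mathrm{SWAP}$ has $\|\Phi_K\pt\|_1=K$, any PPT map creating $\rho_\lambda^{\ox n}$ exactly from $\Phi_K$ forces $K\geq\|(\rho_\lambda^{\ox n})\pt\|_1=\|\rho_\lambda\pt\|_1^{\,n}$, so $\tfrac1n\log K\geq S_{1/2}(\lambda)$ for every admissible $(n,K)$ and therefore $E_{xc}^{\ppt}(\rho_\lambda)\geq S_{1/2}(\lambda)$.

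For achievability I would reduce matters to a single-shot lemma: for every bipartite pure $\sigma$ and every integer $K\geq\|\sigma\pt\|_1$ there is a PPT CPTP map $\mathcal L$ with $\mathcal L(\Phi_K)=\sigma$. Granting this, applying it to $\sigma=\rho_\lambda^{\ox n}$ with $K_n=\lceil\|\rho_\lambda\pt\|_1^{\,n}\rceil$ and using additivity gives exact creation at rate $\tfrac1n\log\lceil\|\rho_\lambda\pt\|_1^{\,n}\rceil\to\log\|\rho_\lambda\pt\|_1=S_{1/2}(\lambda)$, so the integrality of $K$ washes out in the regularisation and $E_{xc}^{\ppt}(\rho_\lambda)\leq S_{1/2}(\lambda)$. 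To prove the lemma I would phrase exact creation through the Rains semidefinite framework \cite{rains}: in terms of the Choi operator $J$ of $\mathcal L$ the requirements are $J\geq0$ (complete positivity), $J\pt\geq0$ (the PPT condition on the $A\!:\!B$ cut), $\tr_{\text{out}}J=\1$ (trace preservation) and $\tr[(\Phi_K\ox\sigma)J]=1$ (which, $\sigma$ being pure, forces $\mathcal L(\Phi_K)=\sigma$ exactly), and I would exhibit an explicit feasible $J$.

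The main obstacle is precisely this construction. A naive ``measure whether the input is $\Phi_K$, then prepare $\sigma$'' map is useless: its Choi operator $\Phi_K\ox\sigma+(\1-\Phi_K)\ox\omega$ has indefinite partial transpose whenever $\sigma$ is entangled, reflecting that entanglement cannot be created but only consumed under PPT. One therefore needs a genuinely coherent map, obtained by adding input--output coherence terms to the diagonal ansatz so as to enforce $J\pt\geq0$ without spoiling $J\geq0$, trace preservation, or the value of $\mathcal L(\Phi_K)$. The slack $K\geq\|\sigma\pt\|_1$ is what gives enough room in $J$ to do this, and the correction terms can be made to line up because of the positivity of the binegativity $\big(|\sigma\pt|\big)\pt\geq0$ of a pure state, which I would establish (or invoke) and which is the technical heart of the achievability; the converse half, by contrast, is immediate from the monotonicity above.
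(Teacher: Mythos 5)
Your route is in substance the same as the paper's: the paper proves this proposition by a one-line appeal to the Audenaert--Plenio--Eisert result \cite{ape}, and what you have written out is precisely that argument specialised to pure states --- converse via monotonicity and multiplicativity of $\|\cdot\pt\|_1$, achievability via a binegativity-based single-shot preparation. Your converse half is complete and correct: $\widetilde{\mathcal{L}}=\Gamma\circ\mathcal{L}\circ\Gamma$ is CPTP, the triangle-inequality argument gives monotonicity of the trace norm of the partial transpose, $\|\Phi_K\pt\|_1=K$, and $\|\rho_\lambda\pt\|_1=\left(\sum_i\sqrt{\lambda_i}\right)^2=2^{S_{1/2}(\lambda)}$ together with multiplicativity on tensor powers forces $\frac{1}{n}\log K\geq S_{1/2}(\lambda)$ for every exact protocol.

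The one genuine gap is the achievability lemma, which you state but do not prove: the ``explicit feasible $J$'' is promised, never exhibited, and as written your proof is incomplete at exactly its technical heart. Fortunately the repair is short, and the ingredient you flagged as needing proof is trivial for pure states: $|\rho_\lambda\pt|=\sum_{i,j}\sqrt{\lambda_i\lambda_j}\,\ketbra{ij}{ij}$ is diagonal in the product basis, hence equal to its own partial transpose, so the binegativity $\left(|\rho_\lambda\pt|\right)\pt\geq 0$ needs no external citation. Rather than hunting for coherence corrections to a Choi matrix, note that the paper's twirling reduction (the lemma leading to the SDP \eref{mixedPrimalSDP}) shows $\Phi_K\PPTto\rho$ if and only if there is $P\geq 0$ with $-(K-1)P\pt\leq\rho\pt\leq(K+1)P\pt$ and $\tr P\leq 1$; the choice $P=\frac{1}{K-1}\left(|\rho\pt|\right)\pt$ then works, since $-(K-1)P\pt=-|\rho\pt|\leq\rho\pt$, $(K+1)P\pt=\frac{K+1}{K-1}|\rho\pt|\geq\rho\pt$, $P\geq 0$ by the diagonality just noted, and $\tr P=\|\rho\pt\|_1/(K-1)$. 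This is exactly the paper's primal feasible point $s_{ij}=a_{ij}=\sqrt{\lambda_i\lambda_j}/(K-1)$ in \eref{primalSDP}. Note it yields the threshold $K\geq\|\sigma\pt\|_1+1$, not your claimed $K\geq\|\sigma\pt\|_1$; the off-by-one is harmless, since taking $K_n=\lceil\|\rho_\lambda\pt\|_1^{\,n}\rceil+1$ still gives rate $\to S_{1/2}(\lambda)$. With the lemma closed this way (or simply cited from \cite{ape}, as the paper does), your proof is correct.
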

\begin{proof}
	This is merely an application of the more general result of Audenaert \emph{et al.} \cite{ape} to pure states.
\end{proof}

All of these quantities are clearly PPT monotones so they provide a necessary condition for the possibility of a PPT pure state transformation:
\begin{proposition}
	If $\rho_{\mu} \PPTto \rho_{\lambda}$ then
	\begin{align}
		S_{1/2}(\mu) \geq S_{1/2}(\lambda), \\
		S_{1}(\mu) \geq S_{1}(\lambda), \\
		S_{\infty}(\mu) \geq S_{\infty}(\lambda).
	\end{align}
\end{proposition}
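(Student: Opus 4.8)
The plan is to read off the result directly from the operational content of the preceding propositions, since there $S_{1/2}(\lambda)$, $S_{1}(\lambda)$ and $S_{\infty}(\lambda)$ were identified with the exact entanglement cost $E_{xc}^{\ppt}(\rho_\lambda)$ (Proposition~\ref{exactPPTcost}), the distillable entanglement $E_{d}^{\ppt}(\rho_\lambda)=S_1(\lambda)$, and the exact distillable entanglement $E_{xd}^{\ppt}(\rho_\lambda)$, respectively. Consequently it is enough to show that each of these three operational quantities is monotone under a \emph{single-copy} PPT conversion $\rho_\mu\PPTto\rho_\lambda$; the three displayed inequalities then follow at once from $E_{xc}^{\ppt}(\rho_\lambda)\leq E_{xc}^{\ppt}(\rho_\mu)$, $E_{d}^{\ppt}(\rho_\lambda)\leq E_{d}^{\ppt}(\rho_\mu)$ and $E_{xd}^{\ppt}(\rho_\lambda)\leq E_{xd}^{\ppt}(\rho_\mu)$.

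To prove monotonicity I would first record two closure properties of the PPT class. Closure under composition: if $\mathcal{L}_1,\mathcal{L}_2$ are PPT then, using $\Gamma\circ\Gamma=\mathrm{id}$, one has $\Gamma\circ(\mathcal{L}_2\circ\mathcal{L}_1)\circ\Gamma=(\Gamma\circ\mathcal{L}_2\circ\Gamma)\circ(\Gamma\circ\mathcal{L}_1\circ\Gamma)$, a composition of two completely positive maps and hence completely positive, so $\mathcal{L}_2\circ\mathcal{L}_1$ is PPT. Closure under tensor powers: writing $\mathcal{T}$ for a PPT map with $\mathcal{T}(\rho_\mu)=\rho_\lambda$, and noting that the partial transpose with respect to the $A^n|B^n$ cut factorises as $\Gamma^{\ox n}$, we get $\Gamma^{\ox n}\circ\mathcal{T}^{\ox n}\circ\Gamma^{\ox n}=(\Gamma\circ\mathcal{T}\circ\Gamma)^{\ox n}$, which is completely positive as a tensor product of completely positive maps; thus $\mathcal{T}^{\ox n}$ is a PPT map with $\mathcal{T}^{\ox n}(\rho_\mu^{\ox n})=\rho_\lambda^{\ox n}$. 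This is the manifestation here of the ``complete'' preservation of positivity of partial transpose stressed in the definition of the class.

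Given these two facts the monotonicity statements are formal. For the distillation quantities, take any admissible sequence $(\mathcal{L}_j)$ for $\rho_\lambda$ and replace it by $(\mathcal{L}_j\circ\mathcal{T}^{\ox n_j})$: each map is PPT by composition closure, and $(\mathcal{L}_j\circ\mathcal{T}^{\ox n_j})(\rho_\mu^{\ox n_j})=\mathcal{L}_j(\rho_\lambda^{\ox n_j})$, so this is an admissible distillation sequence for $\rho_\mu$ at the same rate (the convergence $\|\Phi_{K_j}-\mathcal{L}_j(\rho_\lambda^{\ox n_j})\|_1\to0$, or exact equality, is inherited verbatim because $\mathcal{T}$ acts exactly). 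Taking the supremum, respectively the exact maximisation, gives $E_{d}^{\ppt}(\rho_\mu)\geq E_{d}^{\ppt}(\rho_\lambda)$ and $E_{xd}^{\ppt}(\rho_\mu)\geq E_{xd}^{\ppt}(\rho_\lambda)$, hence $S_1(\mu)\geq S_1(\lambda)$ and $S_\infty(\mu)\geq S_\infty(\lambda)$. Dually, for the exact cost I would append $\mathcal{T}^{\ox n_j}$ to any exact preparation sequence for $\rho_\mu$: since $\mathcal{T}^{\ox n_j}(\mathcal{L}_j(\Phi_{K_j}))=\mathcal{T}^{\ox n_j}(\rho_\mu^{\ox n_j})=\rho_\lambda^{\ox n_j}$, this is an exact preparation of $\rho_\lambda$ at the same rate, so every rate achievable for $\rho_\mu$ is achievable for $\rho_\lambda$ and therefore $E_{xc}^{\ppt}(\rho_\lambda)\leq E_{xc}^{\ppt}(\rho_\mu)$, giving $S_{1/2}(\mu)\geq S_{1/2}(\lambda)$.

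The only genuinely non-formal ingredient is the tensor-power closure used above, i.e.\ that $\mathcal{T}^{\ox n}$ and the composite protocols really remain PPT with respect to the combined $A^n|B^n$ bipartition. I expect this to be the single point worth spelling out, but it is precisely the strengthening from ``preserves PPT'' to ``completely preserves PPT'' emphasised in the definitions, and once it is granted the limits, suprema and infima in the operational definitions pass through the pre- and post-composition without any further difficulty.
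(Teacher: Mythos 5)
Your proposal is correct and follows exactly the route the paper intends: the paper simply asserts that $E_{xc}^{\ppt}$, $E_{d}^{\ppt}$ and $E_{xd}^{\ppt}$ ``are clearly PPT monotones'' and reads off the three inequalities from the preceding propositions identifying them with $S_{1/2}$, $S_{1}$ and $S_{\infty}$. You have merely made explicit the standard closure facts (composition and tensor powers of PPT maps remain PPT, via $\Gamma\circ\Gamma=\mathrm{id}$ and the factorisation of the partial transpose across the $A^n|B^n$ cut) that the paper leaves implicit, and these are spelled out correctly.
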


As Schur concave functions, the R\'{e}nyi entropies of the Schmidt coefficients at all values of $t \in [0,\infty]$ are monotones for LOCC state transformations. Under PPT however, the R\'{e}nyi entropies for $0 \leq t < 1/2$ are \emph{not} monotones.

\begin{example}
  \label{expl:small-alpha}
  Consider a pure state $\rho_{\lambda}$ with Schmidt spectrum
  $$\lambda^{\downarrow} = \left( \frac{1}{20},\frac{1}{20},\frac{1}{20},\frac{4}{20},\frac{4}{20},\frac{9}{20} \right).$$
 \par
  It is easily verified that $S_{1/2}(\lambda)=\log 5$, and
  indeed, in accordance with the Proposition \ref{exactPPTcost} the transformation
  $$\Phi_5^{\otimes(n+o(n))}\longrightarrow\rho_{\lambda}^{\otimes n}$$
  is possible by PPT operations for all sufficiently large $n$.
  \par
  But for $0\leq \alpha<1/2$,
  $$\log 5 < S_{t}(\lambda).$$
  Since there are trivial examples of initial states such that the opposite
  inequality is true (e.g.~$\Phi_6$), the R\'{e}nyi entropies
  at $0\leq t<1/2$ are not PPT monotones.
  \par
  $S_0(\lambda)$ is just the logarithm of the Schmidt rank, so, as was noted in \cite{IshizakaPlenio}, PPT operations can increase the Schmidt rank of pure states, a thing LOCC transformations cannot even do with nonzero probability!
\end{example}

It should be noted that a necessary condition for pure bipartite state transformations by \emph{separable} operations was recently given by Gheorghiou and Griffiths \cite{sepTrans}.

\section{PPT transformations from maximally entangled states.}\label{sec:frommax}

Unless otherwise stated, the final state is $\rho_{\lambda}$, where the Schmidt coefficient vector $\lambda \in \mathbb{R}^d$ is assumed without loss of generality to have no vanishing components (so $d$ is the Schmidt rank of the final state). Since any state with Schmidt rank not greater than $K$ can be produced from $\Phi_K$ by LOCC, the interesting case for PPT transformations is when the Schmidt rank is increased.

\begin{lemma}
	For any (pure or mixed) final state $\rho$, $\Phi_K \stackrel{PPT}{\rightarrow} \rho$ if and only if the solution to the semidefinite program
\begin{equation}\label{mixedPrimalSDP}
	\min \{ \tr \left(P\right) | P \ge 0, -\left(K-1\right)P\pt \leq \rho\pt \leq \left(K+1\right)P\pt \},
\end{equation}
	where $P$ is an hermitian operator on the same Hilbert space as $\rho$, is less than or equal to one.
\end{lemma}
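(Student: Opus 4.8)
The plan is to reduce the existence of a PPT channel taking $\Phi_K$ to $\rho$ to a semidefinite feasibility problem via the Choi--Jamio\l{}kowski isomorphism, and then to simplify that problem using the symmetry of $\Phi_K$. Recall that a Hermitian-preserving map $\mathcal{L}$ from the input system to the output system is a PPT operation exactly when its Choi operator $\Lambda = \sum_{mn}\ketbra{m}{n}\ox\mathcal{L}(\ketbra{m}{n})$ satisfies three conditions: $\Lambda \geq 0$ (complete positivity), $\Lambda\pt \geq 0$ with the partial transpose taken across the Alice--Bob cut of the combined input-output space (the PPT property), and $\tr_{\rm out}\Lambda = \1_{\rm in}$ (trace preservation). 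For the ``if'' direction of the lemma I would not even need the general isomorphism: given a feasible $P$ I would simply write down the candidate Choi operator $\Lambda = \Phi_K\ox\rho + (\1-\Phi_K)\ox P$ and verify directly that it is a valid PPT channel sending $\Phi_K$ to $\rho$.

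For the ``only if'' direction the key step is a twirling reduction. Given any PPT channel $\mathcal{L}$ with $\mathcal{L}(\Phi_K)=\rho$, I would average it over the local unitaries $U\ox\bar U$ that fix $\Phi_K$, i.e.\ replace $\mathcal{L}$ by $\mathcal{L}\circ\mathcal{T}$ where $\mathcal{T}(\cdot)=\int dU\,(U\ox\bar U)(\cdot)(U\ox\bar U)^\dagger$. Because each such conjugation is local (hence PPT), and the PPT property is preserved under composition and convex combination, the averaged map is still a PPT channel, and it still sends $\Phi_K$ to $\rho$ since $\Phi_K$ lies in the commutant. The point of the average is that the commutant of $\{U\ox\bar U\}$ is spanned by the two orthogonal projectors $\Phi_K$ and $\1-\Phi_K$, so the twirled channel is forced into the form $\mathcal{L}(Y)=\tr(\Phi_K Y)\rho + \tr((\1-\Phi_K)Y)P$ for some output state $P$, with Choi operator exactly $\Lambda=\Phi_K\ox\rho+(\1-\Phi_K)\ox P$.

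It then remains to turn the PPT constraint $\Lambda\pt\geq 0$ into the stated inequalities. Here I would use that the partial transpose of $\Phi_K$ on the input is $\frac1K\mathbb{F}$, where $\mathbb{F}$ is the swap operator, and decompose $\mathbb{F}=P_{\rm sym}-P_{\rm anti}$ into its eigenprojectors onto the symmetric and antisymmetric subspaces. Substituting gives $\Lambda\pt = P_{\rm sym}\ox\big(\tfrac1K\rho\pt+\tfrac{K-1}{K}P\pt\big) + P_{\rm anti}\ox\big(-\tfrac1K\rho\pt+\tfrac{K+1}{K}P\pt\big)$, and since $P_{\rm sym}$ and $P_{\rm anti}$ are orthogonal projectors the positivity of $\Lambda\pt$ is equivalent to the positivity of the two output-space blocks separately. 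These two conditions are precisely $-(K-1)P\pt\leq\rho\pt$ and $\rho\pt\leq(K+1)P\pt$; this is where the asymmetric coefficients $K\mp1$ come from, through the two eigenvalues $\pm1$ of the swap combined with the factor $\frac1K$.

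Finally I would connect the existence of such a state $P$ to the value of the semidefinite program. A feasible $P$ produced by the twirl is a genuine state, so $\tr P = 1$ and the program value is at most one. Conversely, any $P\geq 0$ with $\tr P\leq 1$ satisfying the two inequalities can be padded up to a unit-trace state by adding a suitable multiple of $\1$, whose partial transpose is again positive and therefore only relaxes both inequalities; the padded operator then defines a valid channel by the ``if'' direction. I expect the main obstacle to be justifying the twirling reduction cleanly---in particular checking that the average stays PPT and correctly identifying the two-dimensional commutant---after which the remaining steps are routine positivity manipulations.
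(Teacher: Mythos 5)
Your proposal is correct and takes essentially the same route as the paper's proof: twirling over $U\ox\bar{U}$ (PPT because each conjugation is local and the class is closed under composition and mixing) reduces the channel to the two-parameter form $\tau \mapsto \rho\,\tr(\Phi_K\tau) + P\,\tr((\1-\Phi_K)\tau)$, the PPT constraint splits into the two stated inequalities via $\Phi_K\pt = (\mathcal{S}-\mathcal{A})/K$ with orthogonal symmetric/antisymmetric blocks, and trace at most one suffices because a feasible point can be padded by a multiple of $\1$, whose positive partial transpose only relaxes both inequalities. The only cosmetic difference is that you phrase the verification through the Choi operator, while the paper checks $\mathcal{L}'(\tau\pt)\pt \geq 0$ directly on all positive $\tau$; the underlying computation is identical.
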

\begin{proof}
	The argument is almost the same as the one given in \cite{ape}, but we present it here for convenience.
	If there is a PPT map $\mathcal{L}$ such that $\mathcal{L}\left(\Phi_K\right) = \rho$ then the map $\mathcal{L'}$ made by preceding $\mathcal{L}$ with the twirl operation \cite{rains} $\mathcal{T}$ (which can be implemented by LOCC) is also PPT, and does the same transformation since $\mathcal{T}\left(\Phi_K\right) = \Phi_K$. By symmetry it is always possible to write the new map in the form
	\begin{equation}
		\mathcal{L'}\left(\tau\right) = F \tr\left(\Phi_K \tau\right) + G \tr\left(\left(\1 - \Phi_K\right) \tau\right)
	\end{equation}
	where, in order for the map to be CPTP, $F$ and $G$ must be density operators. In order that $\mathcal{L}'\left(\tau\right) = \rho$, we require $F = \rho$, so the desired PPT transformation is possible if and only if there is a density operator $G$ such that the resulting $\mathcal{L'}$ is PPT. $\mathcal{L'}$ is PPT preserving if and only if the operator
	\begin{align}
		\mathcal{L'}\left(\tau \pt\right)\pt &= \rho\pt \tr\left(\Phi_K\pt \tau\right) + G\pt \tr\left(\left(\1 - \Phi_K\right)\pt \tau\right)\\
		&= \rho\pt \tr\left(\left(\mathcal{S} - \mathcal{A}\right) \tau\right)/K + G\pt \left(\left(\left(1-1/K\right)\mathcal{S} + \left(1+1/K\right)\mathcal{A}\right) \tau\right) \\
		&= \tr\left(\mathcal{S} \tau\right) \left(\rho\pt/K + \left(1-1/K\right)G\pt\right) + \tr\left(\mathcal{A} \tau\right) \left(\left(1+1/K\right)G\pt - \rho\pt/K\right)
	\end{align}
	is positive semidefinite for all positive semidefinite $\tau$, where $\mathcal{S}$ and $\mathcal{A}$ are the projectors onto the symmetric and anti-symmetric subspaces of the bipartite space. Since $\mathcal{S}\mathcal{A} = 0$, this condition holds if and only if $\rho\pt/K + \left(1-1/K\right)G\pt \geq 0$ and $\left(1+1/K\right)G\pt - \rho\pt/K \geq 0$. We also require $G \geq 0$ and $\tr G = 1$. The set of operators which satisfy these constraints is precisely the set of feasible points of the SDP \eref{mixedPrimalSDP} which have trace 1. If $P_0$ is feasible, then points $P_0 + t \1$ are also feasible for all $t \geq 0$, so if an optimal point has trace not greater than one, it ensures the existence of a feasible point which satisfies all the constraints on $G$ (so the transformation is possible); if an optimal point has trace greater than one, then clearly no such point exists and hence the transformation is not possible.
	
\end{proof}
	
\begin{proposition}\label{prop:primal}
	\begin{equation}
		\Phi_K \stackrel{PPT}{\rightarrow} \rho_{\lambda} \iff T\left(K;\lambda\right) \leq 1,
	\end{equation}
	where we define $T\left(K;\lambda\right)$ to be the solution to the semidefinite program
	\begin{equation}\label{primalSDP}
		T\left(K;\lambda\right) := \min \left\{ \dsum_{i\geq j}^{d} s_{ij} + \dsum_{i > j}^{d} a_{ij} \bigg| s_{ij} \geq \frac{\sqrt{\lambda_i \lambda_j}}{K+1}, a_{ij} \geq \frac{\sqrt{\lambda_i \lambda_j}}{K-1}, \dsum_{i\geq j} s_{ij} \sigma_{ij}\pt + \dsum_{i > j} a_{ij} \alpha_{ij}\pt \geq 0 \right\},
	\end{equation}
	with
	\begin{eqnarray*}
	&\sigma_{i j}& = \left\{ \begin{array}{l}
		(\ket{ij} + \ket{ji}) (\bra{ij} + \bra{ji})/2 \\
		\ketbra{ii}{ii}
		  \end{array} \right. \begin{array}{c}
		    \textrm{ when } i \neq j, \\
		    \textrm{ when } i = j,
		  \end{array}\\
		  &\alpha_{i j}& = \left(\ket{ij} - \ket{ji}\right) \left(\bra{ij} - \bra{ji}\right)/2,
	\end{eqnarray*}
	and $\{s_{ij} | 1 \leq j \leq i \leq d \}$, $\{a_{ij} | 1 \leq j < i \leq d \}$ together constitute $d^2$ real variables.
\end{proposition}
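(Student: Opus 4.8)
The plan is to specialise the semidefinite program of the preceding Lemma to the pure state $\rho=\rho_{\lambda}$. First I would substitute $Y:=P\pt$, which (since the partial transpose preserves the trace) rewrites that program as: minimise $\tr Y$ over Hermitian $Y$ subject to $Y\pt\ge 0$, $Y\ge\rho_{\lambda}\pt/(K+1)$ and $Y\ge-\rho_{\lambda}\pt/(K-1)$. The point of this reformulation is that $\rho_{\lambda}\pt=\sum_{i,j}\sqrt{\lambda_i\lambda_j}\,\ketbra{ij}{ji}$ is diagonal in the basis that simultaneously diagonalises all the $\sigma_{ij}$ and $\alpha_{ij}$; explicitly $\rho_{\lambda}\pt=\sum_{i\ge j}\sqrt{\lambda_i\lambda_j}\,\sigma_{ij}-\sum_{i>j}\sqrt{\lambda_i\lambda_j}\,\alpha_{ij}$. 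If one can show that the optimal $Y$ may be taken diagonal in this same (symmetric/antisymmetric) basis, i.e.\ of the form $Y=\sum_{i\ge j}s_{ij}\sigma_{ij}+\sum_{i>j}a_{ij}\alpha_{ij}$, then the program reduces directly to \eref{primalSDP} and the claim follows from the Lemma.

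The heart of the argument is therefore a symmetrisation. I would average any feasible $Y$ over the group $G$ generated by the unitaries $D\ox D$ (with $D$ ranging over diagonal unitaries) together with the swap $\mathbb{F}$. Two things must be checked for each generator $g$: that it commutes with $\rho_{\lambda}\pt$, so that $Y\mapsto gYg^{\dagger}$ preserves the constraints $Y\ge\pm\rho_{\lambda}\pt/(K\pm1)$; and that conjugation by $g$ maps the cone $\{Y:Y\pt\ge0\}$ into itself. The first is a direct computation ($D\ox D$ and $\mathbb{F}$ each fix $\rho_{\lambda}\pt$). The second is the subtle point, because the partial transpose does not commute with conjugation by a general unitary: for $D\ox D$ one uses that conjugation by $D\ox D$ turns into conjugation by the unitary $D\ox\overline{D}$ after applying $\pt$, while for $\mathbb{F}$ one uses the identity $(\mathbb{F}Y\mathbb{F})\pt=\mathbb{F}(Y\pt)^{T}\mathbb{F}$ (a full transpose), both of which manifestly preserve positivity. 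Since conjugation preserves the trace and the feasible set is convex, the $G$-average $\overline{Y}$ is feasible with the same objective value and lies in the commutant of $G$.

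It then remains to identify this commutant. Averaging over the diagonal torus $\{D\ox D\}$ annihilates every matrix element $\ketbra{ab}{cd}$ with $\{a,b\}\neq\{c,d\}$, leaving operators block-diagonal with respect to the two-dimensional blocks $\mathrm{span}\{\ket{ij},\ket{ji}\}$ (and the lines $\mathbb{C}\ket{ii}$); averaging further over $\{\1,\mathbb{F}\}$, which acts as a flip within each such block, forces diagonality in the symmetric/antisymmetric eigenbasis. Hence the commutant consists exactly of the operators $\sum_{i\ge j}s_{ij}\sigma_{ij}+\sum_{i>j}a_{ij}\alpha_{ij}$, and the optimum is attained in this form. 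For such $Y$ one has $\tr Y=\sum_{i\ge j}s_{ij}+\sum_{i>j}a_{ij}$ and the constraint $Y\pt\ge0$ is exactly $\sum_{i\ge j}s_{ij}\sigma_{ij}\pt+\sum_{i>j}a_{ij}\alpha_{ij}\pt\ge0$; comparing eigenvalues in the two inequalities $Y\ge\pm\rho_{\lambda}\pt/(K\pm1)$ and using $K\ge2$ collapses them to $s_{ij}\ge\sqrt{\lambda_i\lambda_j}/(K+1)$ and $a_{ij}\ge\sqrt{\lambda_i\lambda_j}/(K-1)$. This is precisely the program defining $T(K;\lambda)$, so the Lemma gives $\Phi_K\PPTto\rho_{\lambda}\iff T(K;\lambda)\le1$. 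I expect the main obstacle to be exactly the second verification in the symmetrisation step---confirming that the awkward constraint $Y\pt\ge0$ survives conjugation by $G$---together with pinning down the commutant precisely; the remaining translation of constraints is routine bookkeeping.
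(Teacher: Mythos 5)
Your proof is correct and takes essentially the same route as the paper: your twirl over the group generated by the diagonal unitaries $D\ox D$ and the swap $\mathbb{F}$ is exactly the conditional expectation onto the abelian algebra spanned by the $\sigma_{ij}$ and $\alpha_{ij}$, which the paper implements directly as the pinching $\Pi\left(\tau\right) = \sum_{i\geq j}\sigma_{ij}\tau\sigma_{ij} + \sum_{i>j}\alpha_{ij}\tau\alpha_{ij}$ and shows to preserve positivity of partial transpose by computing $\Pi\left(\tau\pt\right)\pt$ explicitly. The only difference is in how the PPT constraint is seen to survive the symmetrisation — you check covariance of $\Gamma$ generator by generator (via $\left(\left(D\ox D\right)Y\left(D\ox D\right)^{\dagger}\right)\pt = \left(D\ox\overline{D}\right)Y\pt\left(D\ox\overline{D}\right)^{\dagger}$ and $\left(\mathbb{F}Y\mathbb{F}\right)\pt = \mathbb{F}\left(Y\pt\right)^{T}\mathbb{F}$) and invoke convexity, where the paper verifies the averaged map once and for all — and your subsequent translation of the constraints into \eref{primalSDP} agrees with the paper's.
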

\begin{proof}
\begin{equation}
	\rho_{\lambda}\pt = \displaystyle\sum_{i\geq j}\sqrt{\lambda_i \lambda_j} \sigma_{ij} - \displaystyle\sum_{i > j}\sqrt{\lambda_i \lambda_j} \alpha_{ij}.
\end{equation}
Let $\Pi$ be the projection map on the space of hermitian operators given by
\begin{equation}
	\Pi\left(\tau\right) = \displaystyle\sum_{i\geq j}\sigma_{ij}\tau\sigma_{ij} + \displaystyle\sum_{i > j}\alpha_{ij}\tau\alpha_{ij}.
\end{equation}
To show that this map preserves positivity of partial-transpose, we note that
\begin{equation}
	\Pi\left(\tau\pt\right)\pt = \frac{1}{2}\displaystyle\sum_{i\neq j}\bra{ij}\tau\ket{ij}\left(\ketbra{ij}{ij} + \ketbra{ji}{ji}\right) + \frac{1}{2}\displaystyle\left(\sum_{i}\ketbra{ii}{ii}\right)\left(\tau+\tau^{\ast}\right)\left(\sum_{j}\ketbra{jj}{jj}\right).
\end{equation}
Clearly then, $\Pi$ is positive and preserves positivity of partial transpose. Since $\rho\pt$ lies in the image of $\Pi$, if $P$ is a feasible point of the semidefinite program \eref{mixedPrimalSDP} then $\Pi\left(P\pt\right)\pt$ is also a feasible point. Since $\Pi$ is also trace reducing, it will not change the optimal value of \eref{mixedPrimalSDP} if we impose the additional restriction that $P = \Pi\left(P\pt\right)\pt$. This restriction is obeyed if and only if we can write $P$ in the form
\begin{equation}
	P = \dsum_{i\geq j} s_{ij} \sigma_{ij}\pt + \dsum_{i > j} a_{ij} \alpha_{ij}\pt.
\end{equation}
Substituting this into \eref{mixedPrimalSDP}, we obtain \eref{primalSDP}.
\end{proof}

\begin{lemma}
	\begin{equation}\label{T-upperBound}
		T\left(K; \lambda\right) \leq \left(2^{S_{1/2}\left(\lambda\right)}-1\right)/\left(K-1\right).
	\end{equation}
\end{lemma}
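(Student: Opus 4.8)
The plan is to bound the minimum in \eref{primalSDP} from above by producing a single feasible point and evaluating the objective on it. As a preliminary bookkeeping step I would record that, writing $r_i := \sqrt{\lambda_i}$ and $\Sigma := \sum_i r_i = 2^{S_{1/2}(\lambda)/2}$, the quantity to be matched is $\frac{2^{S_{1/2}(\lambda)}-1}{K-1} = \frac{\Sigma^2-1}{K-1} = \frac{2\sum_{i>j}r_ir_j}{K-1}$, using $\Sigma^2 - 1 = \sum_{i\neq j}r_ir_j$ (since $\sum_i r_i^2 = 1$). This tells me exactly what objective value the feasible point must attain.

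Next I would simplify the positivity constraint by computing the partial transposes $\sigma_{ij}\pt$, $\alpha_{ij}\pt$ explicitly. The upshot is that the operator $M := \dsum_{i\geq j}s_{ij}\sigma_{ij}\pt + \dsum_{i>j}a_{ij}\alpha_{ij}\pt$ is block diagonal: on each two-dimensional space $\mathrm{span}\{\ket{ij},\ket{ji}\}$ with $i>j$ it equals $\frac{s_{ij}+a_{ij}}{2}$ times the identity and is automatically positive, while every cross term $\ketbra{ii}{jj}$ collects in a single $d\times d$ ``diagonal'' block on $\mathrm{span}\{\ket{ii}\}$. Feasibility thus reduces to the positivity of one matrix $D$ with $D_{ii}=s_{ii}$ and $D_{ij}=\frac{s_{ij}-a_{ij}}{2}$ for $i\neq j$.

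With this reduction I would saturate the off-diagonal box constraints, taking $s_{ij}=\frac{r_ir_j}{K+1}$ and $a_{ij}=\frac{r_ir_j}{K-1}$ for $i>j$, so that $D_{ij}=-\frac{r_ir_j}{K^2-1}$. Setting $\ket{R}=\sum_i r_i\ket{ii}$, this gives $D = \dsum_i\L(s_{ii}+\frac{r_i^2}{K^2-1}\R)\ketbra{ii}{ii} - \frac{1}{K^2-1}\ketbra{R}{R}$, and I would then fix the diagonal variables to $s_{ii}=\frac{r_i(\Sigma-r_i)}{K^2-1}$, which turns $D$ into $\frac{1}{K^2-1}\L(\Sigma\dsum_i r_i\ketbra{ii}{ii} - \ketbra{R}{R}\R)$. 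Positivity of $D$ then follows from the rank-one (Schur-complement/Cauchy--Schwarz) criterion $\ketbra{R}{R}\leq \Sigma\sum_i r_i\ketbra{ii}{ii}$, which holds because $\bra{R}\L(\Sigma\sum_i r_i\ketbra{ii}{ii}\R)^{-1}\ket{R} = \frac{1}{\Sigma}\sum_i r_i = 1$; indeed $D$ sits on the boundary of the positive cone with a zero eigenvalue. Summing the objective, $\sum_i s_{ii} = \frac{\Sigma^2-1}{K^2-1}$ and $\sum_{i>j}(s_{ij}+a_{ij}) = \frac{2K}{K^2-1}\sum_{i>j}r_ir_j = \frac{K(\Sigma^2-1)}{K^2-1}$, which add to exactly $\frac{\Sigma^2-1}{K-1}$, the required bound.

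The step I expect to be the real obstacle is verifying that these diagonal values satisfy their \emph{own} box constraint $s_{ii}\geq\frac{\lambda_i}{K+1}$. This reduces to $\Sigma - r_i \geq (K-1)r_i$, i.e.\ $\sum_j\sqrt{\lambda_j}\geq K\sqrt{\lambda_i}$ for every $i$, equivalently $K\max_i\sqrt{\lambda_i}\leq\sum_j\sqrt{\lambda_j}$. When $\lambda$ is sufficiently spread out this holds and the construction is feasible as it stands, giving equality in \eref{T-upperBound}. The difficulty is that the construction pins $D$ to the boundary of the positive cone, leaving no slack to raise an offending $s_{ii}$ without either breaking $D\geq 0$ or inflating the objective past the target. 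Handling the peaked regime, where some $\sqrt{\lambda_i}$ is large relative to $\sum_j\sqrt{\lambda_j}$, is therefore the crux: it calls either for a perturbed feasible point that relaxes the rank-one saturation, or for restricting attention to the regime in which \eref{T-upperBound} is actually invoked.
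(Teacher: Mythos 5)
Your construction is correct where it is feasible, and it is in fact \emph{sharper} than the paper's own proof. The paper simply takes $s_{ij}=a_{ij}=\sqrt{\lambda_i\lambda_j}/(K-1)$ for all entries, which makes your cross-term block $D$ diagonal so that positivity is trivial --- but, as the paper's own closing sentence admits, the objective at that point is $2^{S_{1/2}(\lambda)}/(K-1)$, i.e.\ the paper only establishes the bound \emph{without} the $-1$ in the numerator, and never reconciles this with the lemma as stated. Your point (saturating the off-diagonal box constraints and tuning $s_{ii}=\sqrt{\lambda_i}\,(\Sigma-\sqrt{\lambda_i})/(K^2-1)$ so that $D=\frac{1}{K^2-1}\bigl(\Sigma\sum_i \sqrt{\lambda_i}\,\ketbra{ii}{ii}-\ketbra{R}{R}\bigr)$, with $\Sigma=\sum_i\sqrt{\lambda_i}$) is the natural repair, and your Cauchy--Schwarz verification of $D\geq 0$ is correct; where feasible, it attains exactly $(2^{S_{1/2}(\lambda)}-1)/(K-1)$.

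The obstacle you flagged at the end is, however, not a removable technicality: in the peaked regime the lemma as stated is simply \emph{false}, so no perturbed feasible point can exist. Two ways to see this. First, every feasible point of \eref{primalSDP} obeys the box constraints entrywise, so $T(K;\lambda)\geq\bigl(K2^{S_{1/2}(\lambda)}-1\bigr)/(K^2-1)$ (this is just \eref{T-lowerBound}), and this exceeds $\bigl(2^{S_{1/2}(\lambda)}-1\bigr)/(K-1)$ whenever $2^{S_{1/2}(\lambda)}<K$. Second, even with $2^{S_{1/2}(\lambda)}>K$: take $\lambda^{\uparrow}=(1/10,1/10,4/5)$ and $K=2$; then Proposition \ref{p:rank3} gives $T=T_1=(9+8\sqrt{2})/15\approx 1.3542$, while the claimed bound is $2^{S_{1/2}(\lambda)}-1=(1+4\sqrt{2})/5\approx 1.3314$. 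Your feasibility condition $\sum_j\sqrt{\lambda_j}\geq K\max_i\sqrt{\lambda_i}$ is (up to strictness) precisely the hypothesis of Corollary \ref{c:tt1}, which is the only place the full-strength constant is used, and under that hypothesis your point is actually optimal, since $T_1=(2^{S_{1/2}(\lambda)}-1)/(K-1)$ then matches it from below; Theorem \ref{t:catcond} is unaffected because the weak bound $C2^{S_{1/2}(\lambda)}/(KC-1)$ and the strong one have the same limit $2^{S_{1/2}(\lambda)}/K$. So your second escape route --- restricting to the regime in which the bound is invoked --- is the right one: with that hypothesis added to the statement, your proof is complete and fixes a genuine slip in the paper.
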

\begin{proof}
	The point
	\begin{equation}
		s_{ij} = \sqrt{\lambda_i \lambda_j}/\left(K-1\right), a_{ij} = \sqrt{\lambda_i\lambda_j}/\left(K-1\right)
	\end{equation}
	is primal feasible since
	\begin{equation}
		\dsum_{i} \lambda_i \sigma_{ii}\pt + \dsum_{i > j} \sqrt{\lambda_i\lambda_j} \left(\sigma_{ij} + \alpha_{ij}\right)\pt = \dsum_{i} \lambda_i \ketbra{ii}{ii} + \dsum_{i > j} \sqrt{\lambda_i\lambda_j} \left(\ketbra{ij}{ij} + \ketbra{ji}{ji}\right)\geq 0
	\end{equation}
	and the other inequalities are obviously satisfied. The primal objective function at this point is $2^{S_{1/2}\left(\lambda\right)}/\left(K-1\right)$.
\end{proof}

Since the semidefinite program \eref{primalSDP} is strictly feasible (take the point $s_{ij} = 2, a_{ij} = 2$, for example), its solution is equal to the solution of the dual SDP \cite{vbSDP} so we have

\begin{proposition}
	\begin{align}
		T\left(K;\lambda\right) = \max \left\{ \frac{K 2^{S_{1/2}\left(\lambda\right)} - 1}{K^2 - 1} - \left(\sum_{i\geq j} \frac{\mu_{ij}\sqrt{\lambda_i \lambda_j}}{K+1} + \sum_{i > j} \frac{t_{ij}\sqrt{\lambda_i \lambda_j}}{K-1} \right) \bigg| \left(\mu,t\right) \in R \right\}\label{dualSDP}\\
		R := \left\{\left(\mu,t\right) \bigg| \mu_{ij} \leq 1, t_{ij} \leq 1, \sum_{i\geq j} \mu_{ij} \sigma_{ij}\pt + \sum_{i > j} t_{ij} \alpha_{ij}\pt \geq 0 \right\}.
	\end{align}
	Here $\{\mu_{ij} | 1 \leq j \leq i \leq d \}$, $\{t_{ij} | 1 \leq j < i \leq d \}$ together constitute $d^2$ real variables.
\end{proposition}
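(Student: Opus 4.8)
The plan is to derive \eref{dualSDP} as the Lagrangian dual of the primal program \eref{primalSDP} and to invoke strong duality \cite{vbSDP}, which applies because \eref{primalSDP} is strictly feasible: at $s_{ij}=a_{ij}=2$ the matrix constraint becomes $2\dsum_{i\geq j}\sigma_{ij}\pt+2\dsum_{i>j}\alpha_{ij}\pt=2\1>0$, and since each $\sqrt{\lambda_i\lambda_j}/(K\pm1)<1$ the scalar constraints hold strictly as well. Strong duality will then equate the two optimal values and guarantee that the dual optimum is attained.

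First I would introduce multipliers $p_{ij}\geq0$, $q_{ij}\geq0$ for the scalar constraints and a positive semidefinite $Z\geq0$ for the constraint $\dsum_{i\geq j}s_{ij}\sigma_{ij}\pt+\dsum_{i>j}a_{ij}\alpha_{ij}\pt\geq0$, and form the Lagrangian. As it is affine in the primal variables $s_{ij},a_{ij}$, its infimum over them is $-\infty$ unless every coefficient vanishes; stationarity yields $p_{ij}=1-\tr(Z\sigma_{ij}\pt)$ and $q_{ij}=1-\tr(Z\alpha_{ij}\pt)$. Setting $\mu_{ij}:=\tr(Z\sigma_{ij}\pt)$ and $t_{ij}:=\tr(Z\alpha_{ij}\pt)$, the multiplier nonnegativity $p_{ij},q_{ij}\geq0$ turns into precisely $\mu_{ij}\leq1$, $t_{ij}\leq1$, and the dual objective collapses to $\dsum_{i\geq j}(1-\mu_{ij})\sqrt{\lambda_i\lambda_j}/(K+1)+\dsum_{i>j}(1-t_{ij})\sqrt{\lambda_i\lambda_j}/(K-1)$. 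Separating the $(\mu,t)$-independent part and using $\sum_i\lambda_i=1$ together with $2^{S_{1/2}(\lambda)}=(\sum_i\sqrt{\lambda_i})^2=\sum_{i,j}\sqrt{\lambda_i\lambda_j}$ (whence $\dsum_{i>j}\sqrt{\lambda_i\lambda_j}=(2^{S_{1/2}(\lambda)}-1)/2$ and $\dsum_{i\geq j}\sqrt{\lambda_i\lambda_j}=(2^{S_{1/2}(\lambda)}+1)/2$), the constant reduces to $(K2^{S_{1/2}(\lambda)}-1)/(K^2-1)$, reproducing the objective in \eref{dualSDP}.

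The step I expect to be the crux is showing that the effective dual feasible region coincides with $R$, that is, that the image condition ``$(\mu,t)$ arises from some $Z\geq0$'' may be replaced by the semidefinite condition $\dsum_{i\geq j}\mu_{ij}\sigma_{ij}\pt+\dsum_{i>j}t_{ij}\alpha_{ij}\pt\geq0$. Here I would use that $\{\sigma_{ij}\}_{i\geq j}\cup\{\alpha_{ij}\}_{i>j}$ is an orthonormal family of rank-one projectors for the trace inner product, and that partial transpose preserves this inner product, $\tr(X\pt Y\pt)=\tr(XY)$, so that $\{\sigma_{ij}\pt,\alpha_{ij}\pt\}$ is an orthonormal basis of its span $V$. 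The map $\Pi'$ defined by $\Pi'(X):=\Pi(X\pt)\pt$, which appears in the proof of Proposition \ref{prop:primal}, is then exactly the orthogonal projection onto $V$ (it is idempotent with image $V$ and self-adjoint for the trace inner product), and its preservation of positivity is precisely the statement, established there, that $\Pi$ preserves positivity of partial transpose. Consequently, for any $Z\geq0$ the operator $W:=\Pi'(Z)=\dsum_{i\geq j}\mu_{ij}\sigma_{ij}\pt+\dsum_{i>j}t_{ij}\alpha_{ij}\pt$ is positive semidefinite and carries the same coefficients $\mu_{ij},t_{ij}$, while conversely any $(\mu,t)$ with $W\geq0$ is realised by the choice $Z=W$. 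This identifies the feasible set with $R$, and strong duality then completes the proof.
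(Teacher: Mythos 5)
Your proposal is correct and takes essentially the same route as the paper's proof: strong duality via strict primal feasibility at $s_{ij}=a_{ij}=2$, then the observation that $Z\mapsto\Pi\left(Z\pt\right)\pt$ (with $\Pi$ from Proposition \ref{prop:primal}) preserves dual feasibility, so $Z$ may be restricted without loss of generality to the span of $\left\{\sigma_{ij}\pt,\alpha_{ij}\pt\right\}$ before substituting $\mu_{ij}$, $t_{ij}$. Your explicit Lagrangian derivation, the verification that this map is the trace-orthogonal projection onto that span, and the evaluation of the constant $\left(K2^{S_{1/2}\left(\lambda\right)}-1\right)/\left(K^2-1\right)$ simply fill in details the paper leaves implicit.
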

\begin{proof}
	The semidefinite program dual to \eref{primalSDP} is
	\begin{align}
		\textrm{maximize } \frac{1}{K+1}\sum_{i\geq j}\sqrt{\lambda_i \lambda_j} \left(1-\tr\left(Z \sigma_{ij}\pt\right)\right) + \frac{1}{K-1}\sum_{i > j}\sqrt{\lambda_i \lambda_j} \left(1-\tr\left(Z \alpha_{ij}\pt\right)\right)
	\end{align}
	\begin{align}	
		\textrm{subject to } Z &\geq 0\\
		\tr\left(Z \sigma_{ij}\pt\right) &\leq 1\\
		\tr\left(Z \alpha_{ij}\pt\right) &\leq 1.
	\end{align}
	If $Z$ is feasible point of this program then so is $\Pi\left(Z\pt\right)\pt$ (where $\Pi$ is the map defined in the proof of Proposition \ref{prop:primal}). So the substitutions
	\begin{equation}
		Z = \sum_{i\geq j} \mu_{ij} \sigma_{ij}\pt + \sum_{i > j} t_{ij} \alpha_{ij}\pt, x_{ij} = 1 - \mu_{ij}, y_{ij} = 1 - t_{ij}
	\end{equation}
	result in an SDP with the same solution and this is the one given in the proposition.
\end{proof}

The dual objective at any feasible point of the dual semidefinite program \eref{dualSDP} is a lower bound on $T\left(K;\lambda\right)$, and therefore

\begin{corollary}
	\begin{equation}\label{T-lowerBound}
		T\left(K; \lambda\right) \geq \left(K 2^{S_{1/2}\left(\lambda\right)} - 1\right)/\left(K^2 - 1\right).
	\end{equation}
\end{corollary}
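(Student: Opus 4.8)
The plan is to obtain this lower bound as a direct instance of weak duality, using the dual program \eref{dualSDP} that has just been exhibited. Since strict feasibility of \eref{primalSDP} gives strong duality, $T(K;\lambda)$ equals the \emph{maximum} of the dual objective over the feasible set $R$; consequently the value of that objective at \emph{any} single feasible point is automatically a lower bound on $T(K;\lambda)$. The whole task therefore reduces to writing down one convenient feasible point and evaluating the objective there.

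The natural candidate to try first is the trivial point $\mu_{ij} = 0$ for all $i \geq j$ and $t_{ij} = 0$ for all $i > j$. I would check feasibility directly against the three constraints defining $R$: the inequalities $\mu_{ij} \leq 1$ and $t_{ij} \leq 1$ hold trivially, and the operator constraint becomes $\sum_{i \geq j}\mu_{ij}\sigma_{ij}\pt + \sum_{i>j} t_{ij}\alpha_{ij}\pt = 0 \geq 0$, which is satisfied since the zero operator is positive semidefinite. Evaluating the dual objective at this point, the entire subtracted sum vanishes identically, leaving exactly the constant term $\left(K 2^{S_{1/2}(\lambda)} - 1\right)/(K^2 - 1)$, which is precisely the claimed bound.

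I do not anticipate any genuine obstacle: all the real content is already contained in the derivation of the dual \eref{dualSDP} and in the strong-duality argument that precedes the corollary. The only thing left to verify is feasibility of the chosen point, and the all-zero point is the obvious choice precisely because it annihilates the linear penalty term and isolates the constant. Any other feasible $(\mu,t)$ with a nonpositive penalty would also yield a valid lower bound, but the zero point gives the cleanest statement matching the corollary exactly.
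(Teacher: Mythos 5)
Your proposal is correct and matches the paper's proof exactly: the paper likewise observes that the point $\mu_{ij}=0$, $t_{ij}=0$ is dual feasible, so the dual objective there --- which reduces to the constant term $\left(K 2^{S_{1/2}(\lambda)}-1\right)/\left(K^2-1\right)$ --- lower-bounds $T(K;\lambda)$. Your explicit verification of the three constraints defining $R$ is a slightly more detailed write-up of the paper's one-line justification, but the argument is the same.
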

\begin{proof}
	The point $\mu_{ij} = 0, t_{ij} = 0$ is clearly dual feasible.
\end{proof}

\begin{theorem}\label{t:rank1opt}
	The optimal value of the dual objective that can be attained by a dual feasible point which satisfies the additional constraint that
	\begin{equation}\label{rankconstraint}
		\rank\left(\dsum_{i\geq j} \mu_{ij} \sigma_{ij}\pt + \dsum_{i > j} t_{ij} \alpha_{ij}\pt\right) = 1,
	\end{equation}
 	is given by
	\begin{equation}\label{rankConstrainedOptimum}
		T_1\left(K; \lambda\right) = \frac{K 2^{S_{1/2}\left(\lambda\right)} - 1}{K^2 - 1}+\frac{K}{\left(K^2-1\right)\left(K+c^{\ast}-d\right)}\left( \left(\dsum_{i=1}^{c^{\ast}}\sqrt{\lambda^{\uparrow}_i}\right)^2-\left(K+c^{\ast}-d\right)\dsum_{i=1}^{c^{\ast}}\lambda^{\uparrow}_i\right),
	\end{equation}
	where $c^{\ast}$ is the smallest number $c \in \{1 + d - K,...,d-1 \}$ satisfying
	\begin{equation}\label{c-constraint}
		\frac{\sum_{i=1}^{c}\sqrt{\lambda^{\uparrow}_i}}{K+c-d} \leq \sqrt{\lambda^{\uparrow}_{c+1}}.
	\end{equation}
	If none of the integers in the range satisfy this relation then $c^{\ast} = d$.
\end{theorem}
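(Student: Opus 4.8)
The plan is to evaluate the rank-constrained dual by exploiting the block structure that the partial transposes impose on the feasible operator $M := \dsum_{i\geq j}\mu_{ij}\sigma_{ij}\pt + \dsum_{i>j} t_{ij}\alpha_{ij}\pt$. Computing the partial transposes directly, one finds that $M$ is block diagonal: it acts as a real symmetric $d\times d$ matrix $M_D$ on the ``diagonal'' subspace $\mathrm{span}\{\ket{ii}\}$, with $(M_D)_{ii} = \mu_{ii}$ and $(M_D)_{ij} = \frac12(\mu_{ij}-t_{ij})$ for $i>j$, while on each two-dimensional subspace $\mathrm{span}\{\ket{ij},\ket{ji}\}$ (for $i>j$) it acts as the scalar $\frac12(\mu_{ij}+t_{ij})$ times the identity. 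Since such a $2\times 2$ block has rank $0$ or $2$ but never $1$, the rank constraint \eref{rankconstraint} together with $M\geq 0$ forces every off-diagonal block to vanish, i.e.\ $t_{ij}=-\mu_{ij}$, and forces $M_D$ to be a positive rank-one matrix $M_D = vv^{T}$ for some real $v\in\mathbb{R}^d$. Reading off entries gives $\mu_{ii}=v_i^2$ and $\mu_{ij}=v_iv_j$, whereupon the box constraints $\mu_{ij},t_{ij}\leq 1$ collapse to the single family $v_i^2\leq 1$.

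Next I would substitute this parametrisation into the dual objective \eref{dualSDP}. Writing $w_i := v_i\sqrt{\lambda_i}$ and using $\frac{1}{K+1}-\frac{1}{K-1}=\frac{-2}{K^2-1}$ together with $2^{S_{1/2}(\lambda)}=\big(\sum_i\sqrt{\lambda_i}\big)^2$, the objective reduces after a short calculation to $\frac{K2^{S_{1/2}(\lambda)}-1}{K^2-1}+\frac{1}{K^2-1}g(w)$, where $g(w):=\big(\sum_i w_i\big)^2-K\sum_i w_i^2$ is to be maximised over the box $|w_i|\leq\sqrt{\lambda_i}$. A sign-flip argument shows we may assume $w_i\geq 0$: if $\Sigma:=\sum_i w_i$ is positive but some $w_i<0$, replacing $w_i$ by $|w_i|$ stays in the box, leaves $\sum w_i^2$ fixed, and strictly increases $\Sigma^2$; the case $\Sigma\le 0$ reduces to this by negating $w$.

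The core is the box maximisation of $g$. The stationarity condition $\partial g/\partial w_i = 2(\Sigma - Kw_i)$ shows that at any KKT point each coordinate takes the ``water-filling'' value $w_i=\min(\sqrt{\lambda_i^{\uparrow}},\theta)$ with threshold $\theta:=\Sigma/K$: coordinates with $\sqrt{\lambda_i^{\uparrow}}\leq\theta$ saturate their bound, the rest sit at the common interior value $\theta$. If exactly the $c$ smallest coefficients saturate, self-consistency $\Sigma=K\theta$ yields $\theta=\theta_c:=\big(\dsum_{i=1}^{c}\sqrt{\lambda_i^{\uparrow}}\big)/(K+c-d)$, meaningful precisely for $c\geq 1+d-K$, and back-substitution gives $g=\frac{K}{K+c-d}\big((\dsum_{i=1}^{c}\sqrt{\lambda_i^{\uparrow}})^2-(K+c-d)\dsum_{i=1}^{c}\lambda_i^{\uparrow}\big)$, which is exactly \eref{rankConstrainedOptimum} up to the constant and the factor $1/(K^2-1)$. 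The anticipated main obstacle is certifying that this is the \emph{global} maximum rather than a spurious stationary point, since $g$ is not concave when $d>K$; I would resolve this by noting that the auxiliary function $h(\theta):=\sum_i\min(\sqrt{\lambda_i^{\uparrow}},\theta)-K\theta$ is concave and piecewise linear, hence has a unique positive root $\theta^{*}$, so the only competing KKT point is $w=0$ with value $0$, which the formula dominates.

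Finally I would pin down $c^{*}$. The valid split is characterised by $\sqrt{\lambda_{c}^{\uparrow}}\leq\theta_c\leq\sqrt{\lambda_{c+1}^{\uparrow}}$, whose right inequality is precisely \eref{c-constraint}. Using the identity $\theta_{c+1}-\theta_c = \big((K+c-d)\sqrt{\lambda_{c+1}^{\uparrow}}-\dsum_{i=1}^{c}\sqrt{\lambda_i^{\uparrow}}\big)\big/\big((K+c-d)(K+c-d+1)\big)$, the condition $\theta_c\leq\sqrt{\lambda_{c+1}^{\uparrow}}$ is equivalent to $\theta_{c+1}\geq\theta_c$, while a parallel manipulation shows $\theta_c\geq\sqrt{\lambda_c^{\uparrow}}$ is equivalent to the failure of \eref{c-constraint} at $c-1$; hence the smallest $c$ in the range obeying \eref{c-constraint} automatically obeys the left inequality too and is the unique water-filling index. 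When no $c\in\{1+d-K,\dots,d-1\}$ works, the threshold exceeds $\sqrt{\lambda_d^{\uparrow}}$, every coordinate saturates, $c^{*}=d$, and the formula collapses to $(2^{S_{1/2}(\lambda)}-1)/(K-1)$, consistent with the upper bound \eref{T-upperBound}.
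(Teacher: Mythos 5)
Your proposal is correct, and while its first half coincides with the paper's (this part is essentially forced: the block decomposition of $\sum_{i\geq j}\mu_{ij}\sigma_{ij}\pt+\sum_{i>j}t_{ij}\alpha_{ij}\pt$ into scalar multiples of the identity on each $\mathrm{span}\{\ket{ij},\ket{ji}\}$ plus a $d\times d$ matrix on $\mathrm{span}\{\ket{ii}\}$, yielding $t_{ij}=-\mu_{ij}$, $\mu_{ij}=v_iv_j$, and the same box-constrained quadratic --- indeed the paper's $\Delta(x)$ is exactly your $g(x)/(K^2-1)$), your handling of the core maximisation is genuinely different. The paper keeps $\Delta$ as a double sum, enumerates the faces $F_{\mathcal{C}}$ of the hypercuboid, computes the stationary value $\Delta_{\mathcal{C}}$ on each supporting hyperplane, and then proves an exchange lemma (replacing an element of $\mathcal{C}$ by one with smaller Schmidt coefficient preserves stationarity and does not decrease $\Delta_{\mathcal{C}}$) to conclude that the optimum sits on the largest-dimensional face with $\mathcal{C}=\{1,\dots,c^{\ast}\}$. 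You instead exploit the closed form $g(w)=\left(\sum_i w_i\right)^2-K\sum_i w_i^2$ and run a KKT/water-filling analysis: stationarity forces $w_i=\min\left(\sqrt{\lambda_i^{\uparrow}},\theta\right)$ with $\theta=\Sigma/K$, and concavity of $h(\theta)=\sum_i\min\left(\sqrt{\lambda_i^{\uparrow}},\theta\right)-K\theta$ certifies that the nontrivial KKT point is unique; this automatically saturates the smallest coefficients and selects the smallest admissible $c$, so you obtain the exchange lemma and global optimality in one stroke --- arguably more rigorously than the paper, whose final step (that the smallest $c$ with a stationary point gives the maximum) is asserted rather than fully argued. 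Two small patches are needed: your claim that the water-filling value dominates the competing KKT point $w=0$ is stated without proof, but it follows in one line, since $\sqrt{\lambda_i^{\uparrow}}\leq\theta_{c^{\ast}}$ on the saturated set gives $\sum_{i=1}^{c^{\ast}}\lambda_i^{\uparrow}\leq\theta_{c^{\ast}}\sum_{i=1}^{c^{\ast}}\sqrt{\lambda_i^{\uparrow}}=\left(K+c^{\ast}-d\right)\theta_{c^{\ast}}^2$, hence $g\geq 0$ at the water-filling point; and the existence of a positive root of $h$ requires $d>K$ (initial slope $d-K>0$), which is the interesting regime and the same implicit assumption behind the paper's range $c\in\{1+d-K,\dots,d-1\}$, but deserves an explicit word.
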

(The proof of this theorem is given in the appendix.)
\begin{remark}
	Clearly $T_1\left(K; \lambda\right) \leq T\left(K; \lambda\right)$, so a necessary condition for the transformation $\Phi_K \PPTto \rho_{\lambda}$ is $T_1\left(K; \lambda\right) \leq 1$.
\end{remark}

\begin{corollary}\label{c:tt1}
	If $\left(\dsum_{i=1}^{d-1}\sqrt{\lambda^{\uparrow}_i}\right)/\left(K-1\right) > \sqrt{\lambda^{\uparrow}_{d}}$ then
	\begin{equation}
		T\left(K;\lambda\right) = T_1\left(K;\lambda\right) = \left(2^{S_{1/2}\left(\lambda\right)}-1\right)/\left(K-1\right).
	\end{equation}
\end{corollary}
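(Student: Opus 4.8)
The plan is to squeeze $T(K;\lambda)$ between the upper bound \eqref{T-upperBound} and the rank-constrained lower bound $T_1(K;\lambda)$ by showing that, under the stated hypothesis, both bounds collapse onto the same value $\left(2^{S_{1/2}(\lambda)}-1\right)/(K-1)$. The first thing I would note is that the hypothesis is exactly the negation of the defining inequality \eqref{c-constraint} at $c = d-1$: since $K+(d-1)-d = K-1$, condition \eqref{c-constraint} at $c = d-1$ reads $\frac{\sum_{i=1}^{d-1}\sqrt{\lambda^{\uparrow}_i}}{K-1} \leq \sqrt{\lambda^{\uparrow}_d}$, and our assumption asserts the reverse strict inequality. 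So $c = d-1$ fails to satisfy \eqref{c-constraint}.

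The key step is then to promote this single failure into the statement that $c^{\ast} = d$, i.e.\ that \emph{no} $c \in \{1+d-K,\ldots,d-1\}$ satisfies \eqref{c-constraint}. I would do this by downward induction. Writing $S_c = \sum_{i=1}^c \sqrt{\lambda^{\uparrow}_i}$ and $m_c = K+c-d$, failure of \eqref{c-constraint} at $c$ means $S_c > m_c\sqrt{\lambda^{\uparrow}_{c+1}}$. Using $S_c = S_{c-1}+\sqrt{\lambda^{\uparrow}_c}$, $m_c = m_{c-1}+1$, and crucially the ordering $\sqrt{\lambda^{\uparrow}_c}\leq\sqrt{\lambda^{\uparrow}_{c+1}}$, one gets $S_{c-1}+\sqrt{\lambda^{\uparrow}_c} > (m_{c-1}+1)\sqrt{\lambda^{\uparrow}_{c+1}} \geq (m_{c-1}+1)\sqrt{\lambda^{\uparrow}_c}$, hence $S_{c-1} > m_{c-1}\sqrt{\lambda^{\uparrow}_c}$, which is failure of \eqref{c-constraint} at $c-1$. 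Thus failure at $c = d-1$ propagates down to every $c$ in the range, so by the convention in Theorem \ref{t:rank1opt} we have $c^{\ast} = d$.

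With $c^{\ast} = d$ in hand the remainder is a telescoping computation. Substituting $c^{\ast}=d$ into \eqref{rankConstrainedOptimum} gives $K+c^{\ast}-d = K$, $\sum_{i=1}^{c^{\ast}}\lambda^{\uparrow}_i = 1$ by normalisation, and $\left(\sum_{i=1}^{c^{\ast}}\sqrt{\lambda^{\uparrow}_i}\right)^2 = 2^{S_{1/2}(\lambda)}$ by the definition of the Rényi entropy at $1/2$. The second term then becomes $\left(2^{S_{1/2}(\lambda)}-K\right)/(K^2-1)$, and adding it to the first term collapses the numerator to $(K+1)\bigl(2^{S_{1/2}(\lambda)}-1\bigr)$, so that $T_1(K;\lambda) = \left(2^{S_{1/2}(\lambda)}-1\right)/(K-1)$. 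Combined with the Remark's inequality $T_1(K;\lambda)\leq T(K;\lambda)$ and the upper bound \eqref{T-upperBound}, both inequalities in $\left(2^{S_{1/2}(\lambda)}-1\right)/(K-1) = T_1(K;\lambda) \leq T(K;\lambda) \leq \left(2^{S_{1/2}(\lambda)}-1\right)/(K-1)$ are forced to be equalities, which is the claim.

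The main obstacle is the inductive propagation in the second paragraph, since that is the one place where the geometric structure of the problem — the monotone ordering of $\lambda^{\uparrow}$ together with the specific denominator $K+c-d$ — must actually be exploited; by contrast, the interpretation of the hypothesis and the final substitution are immediate and purely algebraic.
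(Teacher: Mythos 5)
Your proposal is correct and follows the paper's own route exactly: establish $c^{\ast}=d$, substitute into \eqref{rankConstrainedOptimum} to get $T_1(K;\lambda)=\left(2^{S_{1/2}(\lambda)}-1\right)/(K-1)$, and squeeze $T$ between the lower bound $T_1 \leq T$ and the upper bound \eqref{T-upperBound}. The only difference is that you make explicit, via the downward induction showing failure of \eqref{c-constraint} propagates from $c=d-1$ to all smaller $c$, a step the paper's one-line proof treats as immediate --- a worthwhile addition, since the hypothesis by itself only negates \eqref{c-constraint} at $c=d-1$.
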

\begin{proof}
	In this case $c^{\ast} = d$ and so, $T_1\left(K;\lambda\right) = \left(2^{S_{1/2}\left(\lambda\right)}-1\right)/\left(K-1\right)$ which, by Lemma \ref{T-upperBound}, is an upper bound on $T\left(K;\lambda\right)$. Since $T_1$ is also a lower bound on $T$, the result follows.
\end{proof}

\begin{corollary}\label{c:borderline}
	If $S_{1/2}\left(\lambda\right) = \log K$ and $d \geq K$ then the transformation is possible only in the trivial case where the goal state is also a maximally entangled state of rank $K$.
\end{corollary}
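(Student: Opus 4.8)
The plan is to combine three facts already in hand: the characterisation $\Phi_K\PPTto\rho_\lambda\iff T(K;\lambda)\le1$ (Proposition \ref{prop:primal}), the lower bound \eref{T-lowerBound}, and the rank-one value $T_1(K;\lambda)\le T(K;\lambda)$ from Theorem \ref{t:rank1opt}. First I would note that $S_{1/2}(\lambda)=\log K$ means $2^{S_{1/2}(\lambda)}=K$, so \eref{T-lowerBound} reads $T(K;\lambda)\ge (K\cdot K-1)/(K^2-1)=1$. Hence the transformation is possible \emph{only if} $T(K;\lambda)=1$, and in that case $T_1(K;\lambda)\le T(K;\lambda)\le 1$. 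I would then split into the regimes $d=K$ and $d>K$ and show that the former forces $\rho_\lambda=\Phi_K$ while the latter is impossible, which together give the corollary since $d\ge K$.

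For $d=K$ the claim is Cauchy--Schwarz: $2^{S_{1/2}(\lambda)/2}=\sum_{i=1}^K\sqrt{\lambda_i}\le\sqrt K\sqrt{\sum_i\lambda_i}=\sqrt K$, with equality iff all $\lambda_i$ are equal, so $S_{1/2}(\lambda)=\log K$ forces $\lambda=(1/K,\dots,1/K)$, i.e.\ $\rho_\lambda=\Phi_K$ (the trivial case). The real work is to show that for $d>K$ the transformation cannot be carried out. Here I would argue by contradiction: assuming $\Phi_K\PPTto\rho_\lambda$, the necessary condition $S_\infty(\lambda)\le S_\infty(\Phi_K)=\log K$ gives $\lambda^\downarrow_1\ge 1/K$, i.e.\ $\sqrt{\lambda^\uparrow_d}\ge 1/\sqrt K$. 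Writing $x_i=\sqrt{\lambda^\uparrow_i}$ and using $\sum_i x_i=\sqrt K$, this is exactly the statement that $c=d-1$ satisfies \eref{c-constraint}, since $\frac{\sum_{i=1}^{d-1}x_i}{K-1}=\frac{\sqrt K-x_d}{K-1}\le x_d\iff x_d\ge 1/\sqrt K$. As $c^*$ is the \emph{smallest} index in $\{1+d-K,\dots,d-1\}$ satisfying \eref{c-constraint}, this yields $c^*\le d-1<d$, and moreover $c^*\ge 1+d-K\ge 2$ because $d>K$.

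It then remains to show $T_1(K;\lambda)>1$ strictly, contradicting $T_1\le 1$. Writing $m=K+c^*-d\in\{1,\dots,K-1\}$ and $S_c=\sum_{i=1}^{c}x_i$ and substituting $2^{S_{1/2}(\lambda)}=K$ into \eref{rankConstrainedOptimum}, the excess equals $T_1-1=\frac{K}{(K^2-1)m}\left(S_{c^*}^2-m\sum_{i=1}^{c^*}x_i^2\right)$, whose prefactor is positive, so everything reduces to the sign of the bracket. The key manipulation is to introduce the threshold $\theta=S_{c^*}/m$ and, using $S_{c^*}=m\theta$, rewrite the bracket as $m\sum_{i=1}^{c^*}x_i(\theta-x_i)$. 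Minimality of $c^*$ supplies $\theta>x_{c^*}$: when $c^*>1+d-K$ the index $c^*-1$ fails \eref{c-constraint}, giving $S_{c^*-1}>(m-1)x_{c^*}$ and hence $S_{c^*}>m x_{c^*}$; when $c^*=1+d-K$ one has $m=1$, so $\theta=S_{c^*}>x_{c^*}$ trivially since $c^*\ge 2$ and all $x_i>0$. Because the components are non-decreasing and strictly positive, $x_i\le x_{c^*}<\theta$ for every $i\le c^*$, so each summand $x_i(\theta-x_i)$ is strictly positive and $T_1(K;\lambda)>1$, the desired contradiction. I expect this sign analysis — extracting $\theta>x_{c^*}$ from the minimality of $c^*$ and recognising the bracket as $m\sum_i x_i(\theta-x_i)$ — to be the only real obstacle; the reductions via the lower bound \eref{T-lowerBound} and the monotone $S_\infty$ are routine bookkeeping.
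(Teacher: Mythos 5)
Your proof is correct, and it rests on the same pillars as the paper's --- the chain $T_1(K;\lambda)\le T(K;\lambda)\le 1$ from Proposition \ref{prop:primal} and Theorem \ref{t:rank1opt}, and the observation that $2^{S_{1/2}(\lambda)}=K$ makes the first term of \eref{rankConstrainedOptimum} equal to $1$ --- but you execute the decisive positivity step by a genuinely different mechanism. The paper never analyses the optimal index $c^{\ast}$: it deduces $\lambda^{\uparrow}_d\ge 1/K$ from the purely entropic inequality $S_\infty(\lambda)\le S_{1/2}(\lambda)=\log K$ (no transformation hypothesis and no $d=K$ versus $d>K$ split needed, since $S_{1/2}(\lambda)=\log K$ already forces $d\ge K$ by Cauchy--Schwarz), notes that this makes $c=d-1$ satisfy \eref{c-constraint}, and simply evaluates the dual objective at that single face, where the bracket collapses to the perfect square $\left(\sqrt{K\lambda^{\uparrow}_d}-1\right)^2$, yielding
\begin{equation}
T_1\left(K;\lambda\right)\ \ge\ 1+\frac{K\left(\sqrt{K\lambda^{\uparrow}_d}-1\right)^2}{\left(K^2-1\right)\left(K-1\right)}.
\end{equation}
Since this bound is non-strict at $\lambda^{\uparrow}_d=1/K$, the paper closes the boundary case entropically: $S_\infty(\lambda)=S_{1/2}(\lambda)=\log K$ forces $\lambda$ to be uniform of size $K$, i.e.\ the trivial case. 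You instead work at the true optimum $c^{\ast}$ and extract \emph{strict} positivity from the minimality of $c^{\ast}$, via the rewriting of the bracket as $m\sum_{i\le c^{\ast}}x_i(\theta-x_i)$ with $\theta=S_{c^{\ast}}/m>x_{c^{\ast}}$ (your two sub-cases $c^{\ast}>1+d-K$ and $c^{\ast}=1+d-K$, $m=1$, are both handled correctly, using that $\lambda$ has no vanishing components). This absorbs the equality case directly into the inequality, which buys you a uniform contradiction for all $d>K$ without the paper's entropic degeneracy argument, at the cost of the extra case split and a slightly heavier appeal to Proposition \ref{exactPPTcost}-style monotonicity: your invocation of the $S_\infty$ PPT monotone to get $\lambda^{\uparrow}_d\ge 1/K$ is valid inside your contradiction framework, but could be replaced by the paper's assumption-free observation that $t\mapsto S_t(\lambda)$ is non-increasing.
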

\begin{proof}
	The value $c = d - 1$ satisfies \eref{c-constraint} provided that $\lambda^{\uparrow}_d \geq 1/K$ or equivalently if $S_{\infty}\left(\lambda\right) \leq \log K$. Since $S_{t}\left(\lambda\right)$ is a non-increasing function of $t$, this condition is indeed satisfied. Using this value of $c$ in \eref{rankConstrainedOptimum} yields the lower bound
	\begin{equation}
		T_1\left(K; \lambda\right) \geq 1 + \frac{K\left(\sqrt{K \lambda^{\uparrow}_d} - 1\right)^2}{\left(K^2-1\right)\left(K-1\right)}
	\end{equation}
	which is clearly greater than $1$ (implying the impossibility of the transformation) except where $S_{\infty}\left(\lambda\right) = \log K $ which (together with $S_{1/2}\left(\lambda\right) = \log K$) implies $\lambda$ is the uniform distribution of size $K$, so the goal state is a maximally entangled state of rank $K$.
\end{proof}

\begin{proposition}\label{p:rank3}
	In the case where the goal state has Schmidt rank three, $T\left(2; \lambda\right) = T_1\left(2; \lambda\right)$.
\end{proposition}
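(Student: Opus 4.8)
The plan is to prove the stronger statement that the dual program \eref{dualSDP} always has a rank-one optimiser; for $K=2$, $d=3$ this yields $T(2;\lambda)=T_1(2;\lambda)$ at once. First I would record the block structure of the operator $Z=\dsum_{i\geq j}\mu_{ij}\sigma_{ij}\pt+\dsum_{i>j}t_{ij}\alpha_{ij}\pt$. Computing partial transposes gives, for $i\neq j$, $\sigma_{ij}\pt=\tfrac12(\ketbra{ij}{ij}+\ketbra{ji}{ji}+\ketbra{ii}{jj}+\ketbra{jj}{ii})$ and $\alpha_{ij}\pt=\tfrac12(\ketbra{ij}{ij}+\ketbra{ji}{ji}-\ketbra{ii}{jj}-\ketbra{jj}{ii})$, together with $\sigma_{ii}\pt=\ketbra{ii}{ii}$. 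Hence $Z$ is block diagonal: on the ``diagonal sector'' $\mathrm{span}\{\ket{ii}\}$ it restricts to the matrix $M$ with $M_{ii}=\mu_{ii}$ and $M_{ij}=\tfrac12(\mu_{ij}-t_{ij})$, while on each pair-space $\mathrm{span}\{\ket{ij},\ket{ji}\}$ it acts as the scalar $z_{ij}:=\tfrac12(\mu_{ij}+t_{ij})\geq0$ with multiplicity two. Thus $\rank Z=\rank M+2\,\#\{\{i,j\}:z_{ij}>0\}$, so $\rank Z=1$ forces every $z_{ij}=0$ together with $\rank M=1$.

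Next I would argue that the off-diagonal sector can be switched off at optimality. In the variables $(z_{ij},m_{ij})$ with $m_{ij}:=M_{ij}$, the constraints $\mu_{ij}\leq1$ and $t_{ij}\leq1$ read $z_{ij}+m_{ij}\leq1$ and $z_{ij}-m_{ij}\leq1$, and $z_{ij}$ enters the objective only through the positive term $\tfrac{2K}{K^2-1}\sqrt{\lambda_i\lambda_j}\,z_{ij}$. Since $z_{ij}$ appears in neither the constraint $M\geq0$ nor anywhere else, decreasing every $z_{ij}$ to zero keeps the point feasible (one then has $|m_{ij}|\leq1-z_{ij}\leq1$) and cannot decrease the dual objective. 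Hence there is an optimal point with $Z=M\oplus0$, so that $\rank Z=\rank M$ and it remains only to solve the residual problem over $\{M\geq0:M_{ii}\leq1\}$.

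Finally I would write $M$ as a Gram matrix, $M_{ij}=\langle y_i,y_j\rangle$, and set $x_i:=\sqrt{\lambda_i}\,y_i$, so that $M_{ii}\leq1$ becomes $\|x_i\|\leq\sqrt{\lambda_i}$. A short computation shows the dual objective equals a ($\lambda$-dependent) constant minus $\tfrac{1}{K^2-1}\bigl(K\dsum_i\|x_i\|^2-\|\dsum_i x_i\|^2\bigr)$, so I must minimise the bracket. For fixed norms the triangle inequality gives $\|\dsum_i x_i\|^2\leq(\dsum_i\|x_i\|)^2$, with equality exactly when the $x_i$ are non-negative multiples of one common vector; this collinear configuration is therefore optimal, and it produces a rank-one $M$. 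Combined with the previous step, the optimum is attained at a rank-one $Z$, so $T(K;\lambda)=T_1(K;\lambda)$, and in particular $T(2;\lambda)=T_1(2;\lambda)$.

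The step I expect to be most delicate is the feasibility check accompanying the elimination of the off-diagonal sector, together with the partial-transpose bookkeeping that establishes the block decomposition; once these are secured the triangle inequality closes the argument immediately. Since this reasoning nowhere uses $d=3$, I would cross-check the conclusion against the explicit $c^{\ast}$-formula of Theorem \ref{t:rank1opt}: carrying out the one-parameter minimisation of $K\dsum_i\|x_i\|^2-\|\dsum_i x_i\|^2$ subject to $0\leq\|x_i\|\leq\sqrt{\lambda_i}$ should reproduce \eref{rankConstrainedOptimum} exactly, which would confirm that no coupling between the two sectors has been overlooked. As an independent route specific to $d=3$, one could instead invoke complementary slackness and show that the primal-optimal constraint operator has corank one, forcing $Z$ to rank one.
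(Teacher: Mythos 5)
Your proposal is correct, but it takes a genuinely different --- and substantially stronger --- route than the paper. The paper proves Proposition \ref{p:rank3} on the \emph{primal} side: it guesses an explicit feasible point $(s^{\ast},a^{\ast})$ of \eref{primalSDP}, verifies positivity of the resulting operator $P_1\oplus P_2$ via the Sylvester criterion, and observes that its objective value coincides with the closed-form $T_1(2;\lambda)$, giving $T_1\leq T\leq T_1$. You instead work entirely on the \emph{dual} \eref{dualSDP}, and your three steps all check out under the paper's conventions: the block decomposition of $Z$ into the pair sector (scalars $z_{ij}\geq 0$ on $\mathrm{span}\{\ket{ij},\ket{ji}\}$) and the diagonal sector (real symmetric $M\succeq 0$ on $\mathrm{span}\{\ket{ii}\}$) is exactly right; zeroing the $z_{ij}$ preserves feasibility (from $z_{ij}\geq 0$ and $\mu_{ij},t_{ij}\leq 1$ one gets $|m_{ij}|\leq 1$) and weakly increases the objective since $z_{ij}$ enters only with the negative coefficient $-\tfrac{2K}{K^2-1}\sqrt{\lambda_i\lambda_j}$; and the Gram substitution plus triangle inequality correctly shows that maximising $\|\dsum_i x_i\|^2-K\dsum_i\|x_i\|^2$ over $\|x_i\|\leq\sqrt{\lambda_i}$ is achieved by a collinear, hence rank-at-most-one, configuration, reproducing exactly the scalar hypercuboid problem solved in the paper's appendix. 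Two small points you should make explicit: after setting $z=0$ the residual constraint $|M_{ij}|\leq 1$ must be (and is) automatic from $M\succeq 0$ and $M_{ii}\leq 1$ by Cauchy--Schwarz, so the feasible set really is $\{M\succeq 0:\ M_{ii}\leq 1\}$; and the collinear optimum could in principle be rank zero, but this happens only when $d\leq K$ (where $K\dsum_i r_i^2\geq(\dsum_i r_i)^2$ forces $r=0$), which is the trivial LOCC-accessible regime --- for $K=2$, $d=3$ the optimum is genuinely rank one. As a consistency check, in the regime $\sqrt{\lambda_1}+\sqrt{\lambda_2}\leq\sqrt{\lambda_3}$ your reduced maximisation gives inner value $4\sqrt{\lambda_1\lambda_2}$ and hence recovers the paper's $T_1=\bigl(1+4(\sqrt{\lambda_3\lambda_2}+\sqrt{\lambda_3\lambda_1})+8\sqrt{\lambda_1\lambda_2}\bigr)/3$ exactly. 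Note what this means: since nothing in your argument uses $K=2$ or $d=3$, you have proved $T(K;\lambda)=T_1(K;\lambda)$ in general, i.e.\ Conjecture \ref{conj:rankIs1}, which the paper leaves open and supports only numerically --- so your result, if written up with the two checks above, supersedes both the proposition and the conjecture. What each approach buys: the paper's primal certificate yields an explicit optimal $P$ (and hence an explicit PPT map when the transformation is possible) but does not generalise without new guesses; your dual argument yields no explicit primal optimiser but settles the rank-one question once and for all.
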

\begin{proof}
	To simplify notation we shall here assume that $\lambda_i = \lambda^{\uparrow}_i$. If $\sqrt{\lambda_1} + \sqrt{\lambda_2} > \sqrt{\lambda_3}$ then we can apply Corollary \ref{c:tt1} and we're done, so we assume that $\sqrt{\lambda_1} + \sqrt{\lambda_2} \leq \sqrt{\lambda_3}$. In this case
	\begin{equation}
		T_1 = \left(1 + 4\left(\sqrt{\lambda_3 \lambda_2} + \sqrt{\lambda_3 \lambda_1}\right) + 8\sqrt{\lambda_1 \lambda_2 }\right)/3.
	\end{equation}
	We shall show that $T$ is the same by constructing a primal optimal solution. Let
	\begin{eqnarray}
		s^{\ast}_{11} = \frac{1}{3}\lambda_1 + \frac{4}{9}\sqrt{\lambda_1 \lambda_2}, s^{\ast}_{22} = \frac{1}{3}\lambda_2 + \frac{4}{9}\sqrt{\lambda_1 \lambda_2}, s^{\ast}_{33} = \frac{1}{3}\lambda_3, \\
		s^{\ast}_{12} = \left(\frac{1}{3} + \frac{4}{9}\right)\sqrt{\lambda_1 \lambda_2}, s^{\ast}_{13} = \frac{1}{3}\sqrt{\lambda_1 \lambda_3}, s^{\ast}_{23} = \frac{1}{3}\sqrt{\lambda_2 \lambda_3}, \\
		a^{\ast}_{ij} = \sqrt{\lambda_i \lambda_j}
	\end{eqnarray}
	\begin{equation}
	\dsum_{i\geq j} s^{\ast}_{ij} \sigma_{ij}\pt + \dsum_{i > j} a^{\ast}_{ij} \alpha_{ij}\pt = P_1\oplus P_2
	\end{equation}
	\begin{eqnarray*}
	  P_1 = \frac{2}{3} \sum_{i \neq j} \sqrt{\lambda_i
	  \lambda_j} \ketbra{ij}{ij} + \frac{2}{9} \left(\ketbra{23}{23} + \ketbra{32}{32}\right) \geq 0
	\end{eqnarray*}
	Written as a matrix in the $\{\ketbra{ii}{jj}\}$ basis,
	\begin{eqnarray*}
	  P_2 = \frac{1}{9} \left(\begin{array}{ccc}
	    3 \lambda_1 + 4 \sqrt{\lambda_2 \lambda_3} & - 3 \sqrt{\lambda_1 \lambda_2} & - 3 \sqrt{\lambda_1
	    \lambda_3}\\
	    - 3 \sqrt{\lambda_2 \lambda_1} & 3 \lambda_2 + 4 \sqrt{\lambda_2
	    \lambda_3} & - \sqrt{\lambda_2 \lambda_3}\\
	    - 3 \sqrt{\lambda_3 \lambda_1} & - \sqrt{\lambda_3 \lambda_2} & 3
	    \lambda_3
	  \end{array}\right) &  & 
	\end{eqnarray*}
	which can be seen to be positive semidefinite by the Sylvester criterion. Therefore $P_1 \oplus P_2 \geq 0$, and since the other primal constraints are clearly satisfied, the point $\left(s^{\ast},a^{\ast}\right)$ is primal feasible so
	\begin{equation}
		T_1\left(2;\lambda\right) \leq T\left(2;\lambda\right) \leq \dsum_{i\geq j} s^{\ast}_{ij} + \dsum_{i > j} a^{\ast}_{ij} = \left(1 + 4\left(\sqrt{\lambda_3 \lambda_2} + \sqrt{\lambda_3 \lambda_1}\right) + 8\sqrt{\lambda_1 \lambda_2 }\right)/3 = T_1\left(2;\lambda\right).
	\end{equation}
	Therefore $T\left(2;\lambda\right) = T_1\left(2;\lambda\right)$.
\end{proof}
It follows directly from this proposition that:
\begin{theorem}\label{t:rank3}
	A pure state of Schmidt rank three can be produced from $\Phi_2$ (an EPR pair) by PPT operations if and only if its Schmidt coefficients obey
	\begin{equation}
		2\left( \sqrt{\lambda^{\uparrow}_3 \lambda^{\uparrow}_2} + \sqrt{\lambda^{\uparrow}_3 \lambda^{\uparrow}_1} \right) + 4\sqrt{\lambda^{\uparrow}_1\lambda^{\uparrow}_2} \leq 1.
	\end{equation}
\end{theorem}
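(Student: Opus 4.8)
The plan is to obtain the theorem by specialising the semidefinite-programming results to $K=2$ and Schmidt rank $d=3$. By Proposition~\ref{prop:primal} the transformation $\Phi_2 \PPTto \rho_{\lambda}$ is possible if and only if $T(2;\lambda) \leq 1$, and Proposition~\ref{p:rank3} identifies $T(2;\lambda)$ with the rank-one-constrained value $T_1(2;\lambda)$ of Theorem~\ref{t:rank1opt}. So everything reduces to writing $T_1(2;\lambda) \leq 1$ out explicitly; as in the proof of Proposition~\ref{p:rank3} I would assume $\lambda_i = \lambda_i^{\uparrow}$.

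First I would determine $c^{\ast}$ in \eref{rankConstrainedOptimum}. For $K=2,\,d=3$ the admissible range $\{1+d-K,\dots,d-1\}$ is the single integer $c=2$, so \eref{c-constraint} leaves only two possibilities: $c^{\ast}=2$ precisely when $\sqrt{\lambda_1}+\sqrt{\lambda_2} \leq \sqrt{\lambda_3}$, and $c^{\ast}=d=3$ otherwise. In the case $c^{\ast}=2$ the formula collapses to
\[
	T_1(2;\lambda) = \frac{1 + 4\left(\sqrt{\lambda_3\lambda_2}+\sqrt{\lambda_3\lambda_1}\right) + 8\sqrt{\lambda_1\lambda_2}}{3},
\]
the value already computed in Proposition~\ref{p:rank3}; clearing the factor $3$ turns $T_1(2;\lambda)\leq 1$ into exactly the stated inequality. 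This disposes of the region $\sqrt{\lambda_1}+\sqrt{\lambda_2}\leq\sqrt{\lambda_3}$.

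The case $c^{\ast}=3$, that is $\sqrt{\lambda_1}+\sqrt{\lambda_2} > \sqrt{\lambda_3}$, is where the SDP value switches to the other branch $T_1(2;\lambda)=2^{S_{1/2}(\lambda)}-1=(\sqrt{\lambda_1}+\sqrt{\lambda_2}+\sqrt{\lambda_3})^2-1$, and reconciling this with the single stated inequality is the step I expect to be the main obstacle. The resolution I would pursue is to show that for a genuine rank-three state ($\lambda_i>0$) this branch always gives $T_1>1$, and that the stated inequality simultaneously fails, so both sides of the claimed equivalence are false and it holds vacuously. The key estimate is $\sqrt{\lambda_1\lambda_2}+\sqrt{\lambda_1\lambda_3}+\sqrt{\lambda_2\lambda_3} > \tfrac12$: squaring $\sqrt{\lambda_3}<\sqrt{\lambda_1}+\sqrt{\lambda_2}$ and using $\lambda_1+\lambda_2+\lambda_3=1$ gives $\lambda_1+\lambda_2+\sqrt{\lambda_1\lambda_2}>\tfrac12$, while $\lambda_3\geq\lambda_2\geq\lambda_1$ gives $\sqrt{\lambda_1\lambda_3}+\sqrt{\lambda_2\lambda_3}\geq\lambda_1+\lambda_2$, and adding these yields the bound.

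With this estimate in hand the two conclusions follow at once: $(\sqrt{\lambda_1}+\sqrt{\lambda_2}+\sqrt{\lambda_3})^2 = 1 + 2\sum_{i<j}\sqrt{\lambda_i\lambda_j} > 2$ forces $T_1(2;\lambda)>1$ (impossibility), and since the stated left-hand side equals $2\sum_{i<j}\sqrt{\lambda_i\lambda_j}+2\sqrt{\lambda_1\lambda_2} \geq 2\sum_{i<j}\sqrt{\lambda_i\lambda_j} > 1$ it too is violated. Combining the two cases shows that $T_1(2;\lambda)\leq 1$ is equivalent to the displayed inequality, which is the theorem. The only non-mechanical ingredient is the strict bound $\sum_{i<j}\sqrt{\lambda_i\lambda_j}>\tfrac12$ ruling out the $c^{\ast}=3$ branch; the rest is substitution into results already established.
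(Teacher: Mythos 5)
Your proof is correct and takes essentially the same route as the paper: the paper obtains the theorem from Proposition \ref{p:rank3} together with Propositions \ref{prop:primal} and Theorem \ref{t:rank1opt}, stating only that it ``follows directly,'' and your case split on $c^{\ast}$ with the $c^{\ast}=2$ branch reducing to the stated inequality is exactly that argument. The one substantive step you add --- the estimate $\sqrt{\lambda^{\uparrow}_1\lambda^{\uparrow}_2}+\sqrt{\lambda^{\uparrow}_1\lambda^{\uparrow}_3}+\sqrt{\lambda^{\uparrow}_2\lambda^{\uparrow}_3} > \tfrac12$ showing that on the branch $\sqrt{\lambda^{\uparrow}_1}+\sqrt{\lambda^{\uparrow}_2} > \sqrt{\lambda^{\uparrow}_3}$ both $T(2;\lambda)\leq 1$ and the stated inequality fail, so the equivalence holds vacuously there --- is precisely the detail the paper suppresses, and your verification of it is sound.
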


This fact, and some suggestive numerical evidence, leads us to make the following conjecture
\begin{conjecture}\label{conj:rankIs1}
	In the dual program \eref{dualSDP}, the optimal value of the dual objective function is always attained by a point satisfying the rank constraint \eref{rankconstraint} and as a consequence $T\left(K; \lambda\right) = T_1\left(K; \lambda\right)$.
\end{conjecture}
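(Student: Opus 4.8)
The plan is to prove the stronger structural fact by making the operator in the positivity constraint of $R$ fully explicit, since the abstract form in \eref{dualSDP} hides a block decomposition that trivialises the rank question. Writing $M(\mu,t):=\dsum_{i\geq j}\mu_{ij}\sigma_{ij}\pt+\dsum_{i>j}t_{ij}\alpha_{ij}\pt$, I would first compute the partial transposes directly, obtaining $\sigma_{ij}\pt=\frac12(\ketbra{ij}{ij}+\ketbra{ji}{ji}+\ketbra{ii}{jj}+\ketbra{jj}{ii})$ and $\alpha_{ij}\pt=\frac12(\ketbra{ij}{ij}+\ketbra{ji}{ji}-\ketbra{ii}{jj}-\ketbra{jj}{ii})$ for $i>j$, together with $\sigma_{ii}\pt=\ketbra{ii}{ii}$. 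The crucial point is that the only coherences these create lie between the `diagonal' vectors $\ket{ii}$ and $\ket{jj}$, so $M$ is block diagonal with respect to the splitting of $\mathbb{C}^d\ox\mathbb{C}^d$ into $\mathrm{span}\{\ket{ii}\}$ and its complement: on each $\ket{ij}$ with $i\neq j$ it acts as the scalar $(\mu_{ij}+t_{ij})/2$, and on $\mathrm{span}\{\ket{ii}\}$ as the $d\times d$ matrix $W$ with $W_{ii}=\mu_{ii}$, $W_{ij}=(\mu_{ij}-t_{ij})/2$. Hence $M\geq 0\iff \mu_{ij}+t_{ij}\geq 0\ (i>j)$ and $W\geq 0$, and the rank constraint \eref{rankconstraint} is exactly the requirement that all $\mu_{ij}+t_{ij}=0$ and $\rank W=1$.

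I would then collapse the dual to a problem in $W$ alone. The coefficient of $\gamma_{ij}:=\mu_{ij}+t_{ij}$ in the dual objective works out to $-\frac{K\sqrt{\lambda_i\lambda_j}}{K^2-1}<0$, whereas $W$ depends only on $\mu_{ii}$ and on $\mu_{ij}-t_{ij}$; therefore the objective strictly increases as each $\gamma_{ij}$ is driven down to its lower bound $0$, so an optimal point has $t_{ij}=-\mu_{ij}$, $W_{ij}=\mu_{ij}$, and $\rank M=\rank W$. Making this substitution turns \eref{dualSDP} into \[ T(K;\lambda)=\frac{K2^{S_{1/2}(\lambda)}-1}{K^2-1}+\frac{1}{K^2-1}\max_{W}\L(\dsum_{i,j}\sqrt{\lambda_i\lambda_j}\,W_{ij}-K\dsum_i\lambda_i W_{ii}\R), \] the maximum being over $W\geq 0$ with $W_{ii}\leq 1$ (the remaining box constraints $|W_{ij}|\leq 1$ are automatic from positivity and the diagonal bound).

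The heart of the proof is to show this maximum is attained at a rank-one $W$. Representing $W$ as a Gram matrix $W_{ij}=\langle u_i,u_j\rangle$ with $\|u_i\|^2=W_{ii}\leq 1$, the objective reads $\|\dsum_i\sqrt{\lambda_i}\,u_i\|^2-K\dsum_i\lambda_i\|u_i\|^2$. For fixed norms $\|u_i\|=r_i$ the second term is frozen, while $\|\dsum_i\sqrt{\lambda_i}\,u_i\|\leq\dsum_i\sqrt{\lambda_i}\,r_i$ by the triangle inequality (using $\sqrt{\lambda_i}\geq 0$), with equality precisely when all $u_i$ are aligned, i.e. when $W_{ij}=r_i r_j$ has rank one. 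Thus the optimum can always be chosen of rank one, which is the rank constraint, and $T=T_1$ follows. To confirm consistency with Theorem \ref{t:rank1opt} I would finish by solving the residual scalar program $\max_{r\in[0,1]^d}((\dsum_i\sqrt{\lambda_i}\,r_i)^2-K\dsum_i\lambda_i r_i^2)$ by a water-filling KKT analysis: the indices of smallest $\sqrt{\lambda^{\uparrow}_i}$ saturate at $r_i=1$ while the rest take the common interior value fixed by stationarity, the crossover being governed exactly by \eref{c-constraint}; substituting back reproduces \eref{rankConstrainedOptimum}.

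The step I expect to carry the real weight is the block-diagonalisation together with the reduction $\gamma_{ij}=0$: the alignment inequality is elementary, but it only bites because the preceding computation shows $M$ has no coherences beyond the $\ket{ii}$--$\ket{jj}$ couplings, and one must check carefully that zeroing the off-diagonal blocks respects every box constraint and correctly handles the degenerate optimum $W=0$ (where the attained rank is formally zero rather than one). Verifying that the partial transpose produces no further off-diagonal structure, and that the collapsed objective is exactly the rank-one dual value $T_1$ of Theorem \ref{t:rank1opt}, is where I would concentrate the effort.
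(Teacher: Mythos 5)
You have attempted to prove a statement that the paper itself does not prove: Conjecture \ref{conj:rankIs1} is explicitly left open, supported only by ``suggestive numerical evidence'' and by the single special case $K=2$, Schmidt rank three (Proposition \ref{p:rank3}), which the authors settle by a quite different method --- exhibiting an explicit primal feasible point of \eref{primalSDP} whose objective value matches $T_1$. So there is no proof in the paper to compare yours against; the only question is whether your argument stands on its own, and as far as I can verify, it does. Your block decomposition of $M(\mu,t)$ is the same one the appendix uses to analyse the rank constraint: $M$ acts as the scalar $(\mu_{ij}+t_{ij})/2$ on each $\ket{ij}$ with $i\neq j$ and as the matrix $W$ (with $W_{ii}=\mu_{ii}$, $W_{ij}=(\mu_{ij}-t_{ij})/2$) on $\mathrm{span}\{\ket{ii}\}$, so dual feasibility is exactly $\gamma_{ij}:=\mu_{ij}+t_{ij}\geq 0$, $W\succeq 0$, $W_{ii}\leq 1$. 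Your coefficient computation checks out: with $\mu_{ij}=(\gamma_{ij}+\delta_{ij})/2$, $t_{ij}=(\gamma_{ij}-\delta_{ij})/2$ the objective contribution for $i>j$ is $-\sqrt{\lambda_i\lambda_j}\,(K\gamma_{ij}-\delta_{ij})/(K^2-1)$, and since all $\lambda_i>0$ by the paper's standing assumption, driving each $\gamma_{ij}$ to $0$ strictly improves the objective while manifestly preserving $\mu_{ij}\leq 1$, $t_{ij}\leq 1$ and positivity. The collapsed objective you write is algebraically correct, and the decisive new step --- that the maximum over $W\succeq 0$, $W_{ii}\leq 1$ of $\sum_{i,j}\sqrt{\lambda_i\lambda_j}\,W_{ij}-K\sum_i\lambda_iW_{ii}$ is attained at rank one --- is sound and does not even need Gram vectors: for positive semidefinite $W$ one has $W_{ij}\leq\sqrt{W_{ii}W_{jj}}$, so setting $r_i=\sqrt{W_{ii}}$ the feasible rank-$\leq 1$ matrix $rr^T$ dominates $W$ while leaving the penalty term unchanged. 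I also confirmed your collapsed program reproduces the paper's appendix exactly under $x_i=\sqrt{\lambda_i}\,r_i$, and that it yields the known values in the cases covered by Corollary \ref{c:tt1} and Proposition \ref{p:rank3}.

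Three small points deserve tightening. First, your final ``water-filling'' paragraph is unnecessary labour: once you know an optimum is attained at a rank-one point, its value is by definition the rank-constrained optimum that Theorem \ref{t:rank1opt} already evaluates, so $T=T_1$ follows without re-deriving \eref{rankConstrainedOptimum}. Second, the degenerate case you flag ($W=0$, rank zero) should be dispatched explicitly: for $d>K$ the residual quadratic $r^{T}(vv^{T}-K\Lambda)r$ with $v_i=\sqrt{\lambda_i}$, $\Lambda=\mathrm{diag}(\lambda)$ has a strictly positive direction (since $\Lambda^{-1/2}vv^{T}\Lambda^{-1/2}$ has top eigenvalue $d>K$), so the optimum is positive and any optimal $W$ obtained this way is nonzero, hence of rank exactly one; for $d\leq K$ the transformation is trivially achievable by LOCC and nothing is at stake. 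Third, you should note attainment --- though your replacement argument largely sidesteps it, since every feasible point is dominated by a rank-$\leq 1$ feasible point and the residual maximisation is over the compact box $[0,1]^d$. With these details filled in, your argument would upgrade Conjecture \ref{conj:rankIs1} to a theorem and, combined with Proposition \ref{prop:primal}, make $T_1(K;\lambda)\leq 1$ a complete criterion for $\Phi_K\PPTto\rho_\lambda$; given that the authors left this open, such a short elementary resolution warrants a very careful independent check, but I have found no gap in it.
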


\section{Catalysis}\label{sec:cat}

Nielsen's theorem was used to show that a phenomenon analogous to chemical catalysis can occur in LOCC entanglement transformation \cite{catalysis}. That is, there exist pairs of states $\rho_{\lambda}, \rho_{\mu}$ such that $\rho_{\lambda} \stackrel{LOCC}{\nleftrightarrow} \rho_{\mu}$ but where there is a third state $\rho_{\xi}$ such that $\rho_{\lambda}\ox\rho_{\xi} \stackrel{LOCC}{\rightarrow} \rho_{\mu}\ox\rho_{\xi}$.

A necessary and sufficient condition for the existence of a deterministic catalytic transformation for bipartite pure states was recently given in papers by Turgut \cite{turgut} and Klimesh \cite{klimesh}:

\begin{theorem}\label{t:locccatcond}
Given two pure bipartite states $\rho_{\lambda}$ and $\rho_{\mu}$, where $\lambda^{\uparrow} \neq \mu^{\uparrow}$ and $\lambda$ and $\mu$ don't both have components equal to zero, there exists a pure state $\rho_{\xi}$ such that $\rho_{\lambda}\ox\rho_{\xi} \stackrel{LOCC}{\rightarrow} \rho_{\mu}\ox\rho_{\xi}$ if and only if the following conditions are satisfied
\begin{align}
	S_{t}\left(\lambda\right) &> S_{t}\left(\mu\right) \textrm{ for } t \in \left(0, \infty\right), \\	
	f_{t}\left(\lambda\right) &> f_{t}\left(\mu\right) \textrm{ for } t \in \left(-\infty, 0\right],
\end{align}
where
\begin{equation}
	f_t\left(\lambda\right) := \left\{ \begin{array}{l}
		\frac{1}{t - 1}\log \left( \sum_{i=1}^{d} \lambda_i^t \right)\\
		\sum_i \log \lambda_i \\
		-\infty
		  \end{array} \right. \begin{array}{l}
		    \textrm{ when } t > 0 \textrm{ and } \lambda_i \neq 0 \textrm{ for all } i \in \{1,...,d\}, \\
			\textrm{ when } t = 0 \textrm{ and } \lambda_i \neq 0 \textrm{ for all } i \in \{1,...,d\}, \\
		    \textrm{ otherwise. }
		  \end{array}
\end{equation}
\end{theorem}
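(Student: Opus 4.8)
The plan is to first strip away the quantum content. By Nielsen's theorem, $\rho_\lambda\otimes\rho_\xi \LOCCto \rho_\mu\otimes\rho_\xi$ holds exactly when $\lambda\otimes\xi \prec \mu\otimes\xi$, so the assertion reduces to a purely classical characterisation of the \emph{trumping} relation: there is a probability vector $\xi$ with $\lambda\otimes\xi\prec\mu\otimes\xi$ if and only if the stated strict inequalities hold. I would pad $\lambda$ and $\mu$ to a common dimension $d$ and treat the fully-supported and the has-a-zero cases separately, the latter being where $f_t=-\infty$ for $t\le 0$ renders those conditions vacuous.

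For necessity I would exploit that each witness function is simultaneously \emph{additive} under tensor products and monotone under $\prec$. Concretely $S_t(p\otimes q)=S_t(p)+S_t(q)$ and $S_t$ is Schur-concave for every $t\in(0,\infty)$; likewise $f_t$ is additive for $t<0$, and because $\lambda$ and $\mu$ share the dimension $d$ the dimension factors in $f_0(p\otimes q)$ cancel. Applying the appropriate monotone to $\lambda\otimes\xi\prec\mu\otimes\xi$ and cancelling the common $\xi$-contribution yields the non-strict inequalities $S_t(\lambda)\ge S_t(\mu)$ and $f_t(\lambda)\ge f_t(\mu)$ at once. To upgrade to strict inequalities I would invoke the equality case of the Karamata/Schur inequality: since $x\mapsto x^t$ ($t\neq 1$), $x\log x$, and $-\log x$ are all \emph{strictly} convex or concave, equality in any one monotone forces $(\lambda\otimes\xi)^{\downarrow}=(\mu\otimes\xi)^{\downarrow}$. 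Writing the multiset of products as a convolution of the atomic measures $\sum_i\delta_{\log\lambda_i}$ and $\sum_k\delta_{\log\xi_k}$, I can cancel the (non-vanishing) transform of the $\xi$-measure to conclude $\lambda^{\downarrow}=\mu^{\downarrow}$, contradicting the hypothesis $\lambda^{\uparrow}\neq\mu^{\uparrow}$.

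The real work is sufficiency: assuming all the strict inequalities, construct a catalyst $\xi$. I would first argue that the listed functions constitute a \emph{complete} family of additive $\prec$-monotones, so that the conditions say precisely that every such monotone strictly decreases from $\lambda$ to $\mu$. The engine of the argument is the entire (Dirichlet-type exponential) function $D(z):=\sum_i \mu_i^{\,z}-\sum_i\lambda_i^{\,z}$, whose behaviour on the real axis encodes the inequalities: for instance $D(t)>0$ on $(1,\infty)$ and $D(t)<0$ on $(0,1)$ are the $S_t$ conditions, with the $f_t$ and $f_0$ conditions governing $t\le 0$ and the logarithmic behaviour near the origin. I would then convert this sign and zero-counting information into a construction of $\xi$, either by an iterative procedure that repairs the finitely many violated partial sums of $\lambda\otimes\xi\prec\mu\otimes\xi$, or by a limiting argument in which a high tensor power of a carefully chosen base distribution is shown, via large-deviation estimates driven by the strict gaps, to satisfy the majorisation.

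The main obstacle is precisely this construction: translating the global, all-$t$ analytic sign information about $D$ into the \emph{local} combinatorial statement that the ordered partial sums of $(\mu\otimes\xi)^{\downarrow}$ dominate those of $(\lambda\otimes\xi)^{\downarrow}$ for some finite $\xi$. Everything else is soft; this step is what makes the theorem deep and is the reason it required the independent and technically demanding arguments of Turgut and Klimesh. I would expect to spend the bulk of the effort controlling the boundary parameters $t=0$ and $t\to\infty$, equivalently the products $\prod_i\lambda_i$ and the largest Schmidt coefficients, where the strictness is most delicate.
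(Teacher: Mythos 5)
First, a point of reference: the paper does not prove this theorem at all --- it is imported verbatim from Turgut \cite{turgut} and Klimesh \cite{klimesh}, so there is no in-paper proof to match your attempt against; the comparison has to be with those cited works. Against that standard, your necessity direction is essentially sound and is indeed the ``soft'' half: Nielsen's theorem reduces everything to trumping, each $S_t$ ($t\in(0,\infty)$) and $f_t$ ($t<0$) is additive under $\otimes$ and Schur-concave, so $\lambda\ox\xi\prec\mu\ox\xi$ gives the non-strict inequalities after cancelling the $\xi$-terms (with the dimension bookkeeping you note at $t=0$, and the zero-component cases dispatched by $f_t=-\infty$). Your upgrade to strictness also works: strict Schur-concavity of each witness forces $(\lambda\ox\xi)^{\downarrow}=(\mu\ox\xi)^{\downarrow}$ in any equality case, and the deconvolution step is legitimate because the transform of the finite atomic $\xi$-measure, $s\mapsto\sum_k\xi_k^{\,is}$, extends to a nonzero analytic function whose zeros are isolated, so it cancels and yields $\lambda^{\downarrow}=\mu^{\downarrow}$, contradicting $\lambda^{\uparrow}\neq\mu^{\uparrow}$.

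The genuine gap is exactly where you flag it, but it is larger than your framing suggests: sufficiency is not merely ``the bulk of the effort,'' it is entirely absent from your proposal, and neither of the two strategies you name is known to close it as stated. Your opening move --- ``first argue that the listed functions constitute a complete family of additive $\prec$-monotones'' --- is circular: completeness of this family \emph{is} the sufficiency statement, so it cannot serve as a lemma toward it. The large-deviation/tensor-power route naturally yields asymptotic or multiple-copy versions of trumping (where one shows $\lambda^{\ox n}\ox\xi\prec\mu^{\ox n}\ox\xi$ for large $n$, or approximate catalysis), not the existence of a single finite $\xi$ with $\lambda\ox\xi\prec\mu\ox\xi$ exactly; and ``iteratively repairing the finitely many violated partial sums'' founders because appending components to $\xi$ reorders the entire product spectrum and can re-violate previously repaired prefix sums --- there is no monotone potential in your sketch showing the repair terminates. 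What Turgut and Klimesh actually do is construct explicit parametrized families of catalysts (power-law-type spectra in Turgut's Laplace/duality analysis; intricate explicit families with combinatorial estimates in Klimesh) and verify the full set of ordered partial-sum inequalities directly from the strict gaps in $D(t)$, including delicate control at the endpoints $t\to 0^{+}$ and $t\to\infty$ that you correctly identify as critical but do not carry out. As it stands, your proposal proves the ``only if'' direction and correctly diagnoses, without supplying, the ``if'' direction.
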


Using the results established in the last section, we can show that catalysis can also occur under PPT operations.

\begin{theorem}\label{t:catcond}
	$\Phi_C\ox\Phi_K \PPTto \Phi_C\ox\rho_{\lambda}$ (where $\rho_\lambda \neq \Phi_K$) for some sufficiently large value of $C$ if and only if $S_{1/2}\left(\lambda\right) < \log K$.
\end{theorem}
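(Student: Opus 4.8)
The plan is to reduce this catalytic statement to a \emph{non}-catalytic transformation from a maximally entangled state, so that the SDP machinery of Section \ref{sec:frommax} applies verbatim. The key observation is that $\Phi_C\ox\Phi_K=\Phi_{CK}$ is itself a maximally entangled state of rank $CK$, while $\Phi_C\ox\rho_\lambda=\rho_\mu$, where the Schmidt vector $\mu$ consists of the entries $\lambda_j/C$ each repeated $C$ times. Hence the question is precisely whether $\Phi_{CK}\PPTto\rho_\mu$, which by Proposition \ref{prop:primal} holds if and only if $T(CK;\mu)\leq 1$. I will also record the additivity of the order-$\tfrac12$ R\'enyi entropy under tensoring, $S_{1/2}(\mu)=\log C+S_{1/2}(\lambda)$, equivalently $2^{S_{1/2}(\mu)}=C\,2^{S_{1/2}(\lambda)}$.

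For sufficiency, assume $S_{1/2}(\lambda)<\log K$ and write $x:=2^{S_{1/2}(\lambda)}<K$. Feeding $(CK,\mu)$ into the upper bound \eref{T-upperBound} gives $T(CK;\mu)\leq\bigl(Cx-1\bigr)/\bigl(CK-1\bigr)$, which converges to $x/K$ as $C\to\infty$. Since $x/K<1$ precisely when $S_{1/2}(\lambda)<\log K$, for all sufficiently large $C$ we obtain $T(CK;\mu)<1$, and therefore the transformation is possible.

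For necessity, suppose the transformation is possible for some $C$. Because $S_{1/2}$ is a PPT monotone (Proposition \ref{exactPPTcost} together with the monotonicity proposition following it), we have $S_{1/2}(\Phi_{CK})\geq S_{1/2}(\rho_\mu)$, i.e.\ $\log(CK)\geq\log C+S_{1/2}(\lambda)$, giving $S_{1/2}(\lambda)\leq\log K$. To upgrade this to a strict inequality I rule out the equality case: if $S_{1/2}(\lambda)=\log K$ then $\bigl(\sum_i\sqrt{\lambda_i}\bigr)^2=K$, and Cauchy--Schwarz forces $d\geq K$, so $\rho_\mu$ has Schmidt rank $Cd\geq CK$ with $S_{1/2}(\mu)=\log(CK)$. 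Corollary \ref{c:borderline}, applied to $\Phi_{CK}\PPTto\rho_\mu$, then asserts that the transformation succeeds only when $\rho_\mu=\Phi_{CK}$, which forces $\rho_\lambda=\Phi_K$ and contradicts the hypothesis $\rho_\lambda\neq\Phi_K$. Hence $S_{1/2}(\lambda)<\log K$.

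The one genuinely nontrivial move is the opening reduction: recognising that $\Phi_C\ox\Phi_K=\Phi_{CK}$ collapses the catalytic problem onto the already-solved maximally-entangled-source SDP, after which both directions are short limit and monotonicity arguments. I expect the main obstacle to be the borderline case $S_{1/2}(\lambda)=\log K$ in the necessity direction, where the $S_{1/2}$ monotone alone is not enough and one must invoke Corollary \ref{c:borderline}; checking its hypothesis $d\geq K$ via Cauchy--Schwarz is the only place an extra estimate is needed.
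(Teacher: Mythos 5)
Your proof is correct and follows essentially the same route as the paper: it reduces the catalytic question to $\Phi_{CK}\PPTto\rho_{\lambda\ox U_C}$, uses the upper bound \eref{T-upperBound} on $T(CK;\lambda\ox U_C)$ for sufficiency, and combines the $S_{1/2}$ monotonicity from Proposition \ref{exactPPTcost} with Corollary \ref{c:borderline} to handle the borderline case $S_{1/2}(\lambda)=\log K$ in necessity. Your explicit Cauchy--Schwarz verification of the hypothesis $d\geq K$ of Corollary \ref{c:borderline} is a detail the paper leaves implicit, but the argument is otherwise identical.
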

\begin{proof}
	For a given value of $C$, the transformation is possible if and only if $T\left(KC;\lambda\ox U_C\right) \leq 1$, where $U_C$ denotes the uniform probability distribution vector with $C$ elements.
	As $C$ tends to infinity, both the upper bound \eref{T-upperBound} and lower bound \eref{T-lowerBound} on $T\left(KC;\lambda\ox U_C\right)$ tend to $2^{S_{1/2}\left(\lambda\right)}/K$. Therefore, if $S_{1/2}\left(\lambda\right) < \log K$ then for some sufficiently large $C$, $T\left(KC;\lambda\ox U_C\right) \leq 1$ so the corresponding transformation is possible. That the condition is necessary follows from Proposition \ref{exactPPTcost} ($S_{1/2}$ cannot increase in \emph{any} PPT pure state transformation and the R\'{e}nyi entropies are additive) and Corollary \ref{c:borderline} (deals with the case where $S_{1/2}$ stays the same).
\end{proof}

In the case where the goal state has Schmidt rank three, the states satisfying
\begin{equation}
	2\left (\sqrt{\lambda^{\uparrow}_3 \lambda^{\uparrow}_2} + \sqrt{\lambda^{\uparrow}_3 \lambda^{\uparrow}_1} + \sqrt{\lambda^{\uparrow}_1 \lambda^{\uparrow}_2 } \right) \leq 1
\end{equation}
are exactly those that can be reached from $\Phi_2$ by PPT operations when maximally entangled catalysts of arbitrarily high rank are available, according to Theorem \ref{t:catcond}. This is a strict superset of those Schmidt rank three states which can be obtained from $\Phi_2$ without a catalyst (see Theorem \ref{t:rank3}). These regions are illustrated on one cell of the simplex of Schmidt coefficient vectors in Fig. \ref{fig:allowed-regions}.
	
\begin{figure}
\includegraphics[scale=0.5]{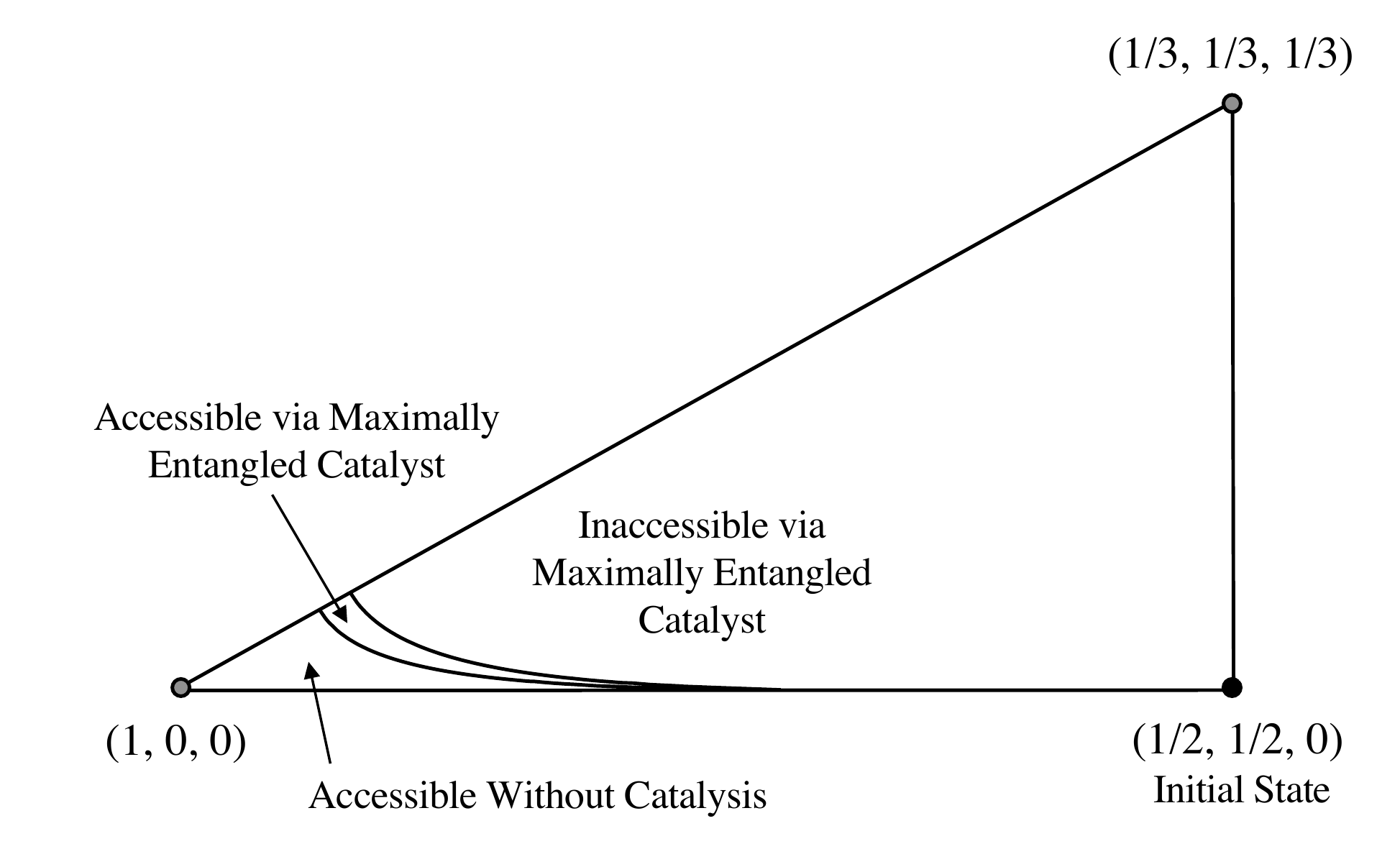}
\caption{Pure states of Schmidt rank three accessible deterministically from a single EPR pair by PPT and by PPT assisted by a maximally entangled catalyst.}\label{fig:allowed-regions}
\end{figure}

\section{Conclusions}\label{sec:conc}

We have provided a necessary condition for the exact preparation of a pure bipartite state from a maximally entangled state by PPT operations in terms of the Schmidt coefficients of the final state. We conjecture that this condition is also sufficient and have shown that this is true when the final state has Schmidt rank three. We have demonstrated that the phenomenon of catalysis occurs in the context of PPT operations. A notable difference from LOCC catalysis is that maximally entangled states can act as catalysts under PPT operations (this is impossible with LOCC). In the case where both the initial state and the catalyst are maximally entangled, we have given a necessary and sufficient condition for the production of a pure state: the R\'{e}nyi entropy at $1/2$ of the Scmidt coefficient vector must decrease, unless the final state is the same as the initial state. To give a direction for future work -- comparing Theorem \ref{t:catcond} and the observations on PPT monotones given in Section \ref{sec:general} with Theorem \ref{t:locccatcond}, we make the following conjecture:

\begin{conjecture}
	When $\lambda^{\uparrow} \neq \mu^{\uparrow}$, there exists a catalyst state $\rho_{\xi}$ such that $\rho_{\xi}\ox\rho_{\lambda} \PPTto \rho_{\xi}\ox\rho_{\mu}$ if and only if
	\begin{equation}
	S_{t}\left(\lambda\right) > S_{t}\left(\mu\right), \forall t \in [1/2,\infty).
	\end{equation}
\end{conjecture}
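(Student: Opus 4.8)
The plan is to treat the two implications separately: necessity should follow fairly directly from the additive-monotone structure already developed in Sections~\ref{sec:general} and \ref{sec:frommax}, while sufficiency is the genuinely hard direction, which I would attack by trying to generalise the machinery of Section~\ref{sec:frommax} away from maximally entangled initial states. Throughout I would keep Theorem~\ref{t:catcond} in view as the known special case (maximally entangled initial state \emph{and} maximally entangled catalyst), which the conjecture must reduce to when $\lambda = U_K$.

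For necessity, the first step is to promote the three operational monotones of Section~\ref{sec:general} (at $t = 1/2, 1, \infty$) to the statement that $S_t$ is an additive PPT monotone for \emph{every} $t \in [1/2,\infty)$. I would try to obtain this as a single-shot statement: that $\rho_{\mu} \PPTto \rho_{\lambda}$ forces $S_t(\mu) \geq S_t(\lambda)$ for all such $t$, either by interpolating between the three endpoint results or, more robustly, by exhibiting explicit dual feasible points of the single-shot transformation SDP that certify each inequality. Granting this, additivity of the R\'{e}nyi entropies together with the catalytic hypothesis $\rho_\xi \ox \rho_\lambda \PPTto \rho_\xi \ox \rho_\mu$ gives the non-strict inequalities $S_t(\lambda) \geq S_t(\mu)$ on $[1/2,\infty)$ at once. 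The second step is to upgrade these to strict inequalities when $\lambda^\uparrow \neq \mu^\uparrow$. Corollary~\ref{c:borderline} already achieves this at the endpoint $t = 1/2$ for a maximally entangled initial state; I would look for a rigidity argument, the PPT analogue of the equality analysis underlying Turgut's and Klimesh's Theorem~\ref{t:locccatcond}, showing that if $S_{t_0}(\lambda) = S_{t_0}(\mu)$ for some $t_0 \in [1/2,\infty)$ then the catalysed SDP is infeasible unless $\lambda^\uparrow = \mu^\uparrow$.

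For sufficiency I would first attempt the most economical catalyst, a maximally entangled $\Phi_C$ of large rank, mirroring Theorem~\ref{t:catcond}; the obstruction is that $\Phi_C \ox \rho_\lambda$ is itself not maximally entangled, so the twirl-based reduction that produced the clean program \eref{mixedPrimalSDP} and the closed-form monotone $T(K;\lambda)$ no longer applies. The core task is therefore to set up and control the transformation SDP for a \emph{general} initial state: one replaces the twirl over $\Phi_K$ by averaging over the (smaller) stabiliser group of $\rho_\lambda \ox \rho_\xi$, which still yields a semidefinite feasibility problem but without the two-parameter simplification. I would then try to show that, as $C \to \infty$, primal feasible points can be built whose cost is controlled by $2^{S_{1/2}}$-type quantities as in the bounds \eref{T-upperBound} and \eref{T-lowerBound}, so that the strict inequalities across the whole family of $t$ values guarantee feasibility for large $C$; the multi-parameter strictness is exactly what should beat several competing constraints simultaneously, just as a single strict inequality sufficed in Theorem~\ref{t:catcond}.

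A complementary route, which I would pursue in parallel, is reduction to the LOCC catalysis theorem~\ref{t:locccatcond}: since $\locc \subset \ppt$, it suffices to precede and follow an LOCC-catalytic step by cheap PPT operations that exploit PPT's extra power, namely its ability to increase Schmidt rank and to redistribute small Schmidt coefficients as in Example~\ref{expl:small-alpha}, so as to manufacture the additional LOCC conditions ($S_t$ on $(0,1/2)$ and $f_t$ on $(-\infty,0]$) that PPT itself does not require. The main obstacle, and the reason this remains a conjecture, lies precisely here: the R\'{e}nyi entropies at different parameters are coupled through the single Schmidt vector, so one cannot freely tune the small-$t$ and negative-$t$ behaviour without disturbing the controlled region $[1/2,\infty)$, and there is no closed-form monotone analogous to $T(K;\lambda)$ to certify feasibility for non-maximally-entangled initial states. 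I expect the decisive difficulty to be constructing the catalyst (or the intermediate states) explicitly while keeping all of the infinitely many entropic constraints satisfied, with the strictness part of necessity a secondary but still nontrivial hurdle.
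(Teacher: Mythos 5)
You have set out to prove a statement that the paper itself leaves open: this is the concluding conjecture, offered by the authors explicitly as ``a direction for future work,'' with no proof beyond the consistency checks you also perform (reduction to Theorem~\ref{t:catcond} when $\lambda$ is a uniform distribution, and the structural analogy with the Turgut--Klimesh Theorem~\ref{t:locccatcond}). So there is no proof in the paper to compare against, and your text is, as you yourself half-concede, a research programme rather than a proof: neither implication is actually established by your argument.

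The gaps are concrete on both sides. For necessity, your first step requires that $S_t$ be a PPT monotone on pure states for \emph{every} $t \in [1/2,\infty)$, but the paper proves monotonicity only at the three points $t = 1/2, 1, \infty$ (via exact cost, asymptotic cost/distillation, and exact distillation respectively), and these do not ``interpolate'': the difference $t \mapsto S_t(\lambda) - S_t(\mu)$ is not of one sign just because it is nonnegative at three parameters, since two R\'{e}nyi curves can satisfy the endpoint inequalities and still cross on the interior of the interval. Monotonicity at intermediate $t$ is therefore itself an open problem, not a lemma you can cite or lightly derive. Your fallback --- dual feasible points of ``the single-shot transformation SDP'' --- presupposes a tractable program, but the clean SDP \eref{mixedPrimalSDP} exists only because the twirl fixes $\Phi_K$ and collapses the map to two parameters $(F,G)$; for a general initial state $\rho_\mu$ the symmetry group is too small, no closed-form analogue of $T(K;\lambda)$ is known, and this is exactly where your sufficiency route also stalls, as you acknowledge. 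The strictness upgrade is likewise unproven: Corollary~\ref{c:borderline} gives rigidity only for a maximally entangled input at $t = 1/2$, and the PPT analogue of the Turgut--Klimesh equality analysis does not exist in the paper or in your sketch. Your plan is sensible, correctly identifies the hard points, and your consistency check against Theorem~\ref{t:catcond} is right, but every load-bearing step --- full-interval monotonicity, rigidity, and the large-$C$ primal construction for non-maximally-entangled inputs --- is missing, so this cannot be credited as a proof of the statement.
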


It would also be desirable to determine the validity of Conjecture \ref{conj:rankIs1}.

\begin{acknowledgments}
WM acknowledges support from the U.K. EPSRC. 
AW is supported by the U.K. EPSRC (project ``QIP IRC'' and an Advanced
Research Fellowship), by a Royal Society Wolfson Merit Award, and
the EC, IP ``QAP''. The Centre for Quantum Technologies is funded by the
Singapore Ministry of Education and the National Research Foundation as
part of the Research Centres of Excellence programme.
We thank Armin Uhlmann and Jens Eisert for early discussions on the topic of this
paper, and Richard Low for his comments on presentation.
\end{acknowledgments}

\vfill\pagebreak

\appendix
\section{Proof of Theorem \ref{t:rank1opt}}
The rank constraint
\begin{equation}
	\rank \left(\dsum_{i\geq j} \mu_{ij} \sigma_{ij}\pt + \dsum_{i > j} t_{ij} \alpha_{ij}\pt\right) = 1
\end{equation}
can be rewritten as
\begin{equation}
\rank\left(\dsum_{i > j} \left(\mu_{ij} + t_{ij}\right)\left(\ketbra{ij}{ij}+\ketbra{ji}{ji}\right)/2 + \dsum_{i}\mu_{ii}\ketbra{ii}{ii} + \dsum_{i > j} \left(\mu_{ij} - t_{ij}\right)\left(\ketbra{ii}{jj}+\ketbra{jj}{ii}\right)/2 \right) = 1.
\end{equation}
The first sum and the remainder of the expression of the operator have disjoint support. The first sum can only have even rank, so to satisfy the condition we must have $t_{ij} = -\mu_{ij}$ (so that it's rank is zero). The remaining terms are then
	$\sum_{i}\mu_{ii}\ketbra{ii}{ii} + \sum_{i > j} \mu_{ij}\left(\ketbra{ii}{jj}+\ketbra{jj}{ii}\right)$
which has rank one if and only if
	$\mu_{ij} = u_i u_j,$
for some $u \in \mathbb{R}^d$.

Making the change of variables $x_i = \sqrt{\lambda_i} u_i$, the rank-constrained version of the optimisation problem \eref{dualSDP} is therefore equivalent to
\begin{equation}
	T_1\left(K; \lambda\right) = \frac{K 2^{S_{1/2}\left(\lambda\right)} - 1}{K^2 - 1} + \max\{\Delta\left(x\right) | x \in S \}
\end{equation}
where
\begin{equation}
	\Delta\left(x\right) = \dsum_{i > j} \frac{x_i x_j}{K-1} - \dsum_{i\geq j} \frac{x_i x_j}{K+1}
\end{equation}
and $S$ is the hypercuboid defined by
\begin{equation}
	|x_i| \leq \sqrt{\lambda_i} \textrm{ for all } i \in \{1,...,d\}.
\end{equation}

\begin{figure}[ht]
\includegraphics[scale=0.7]{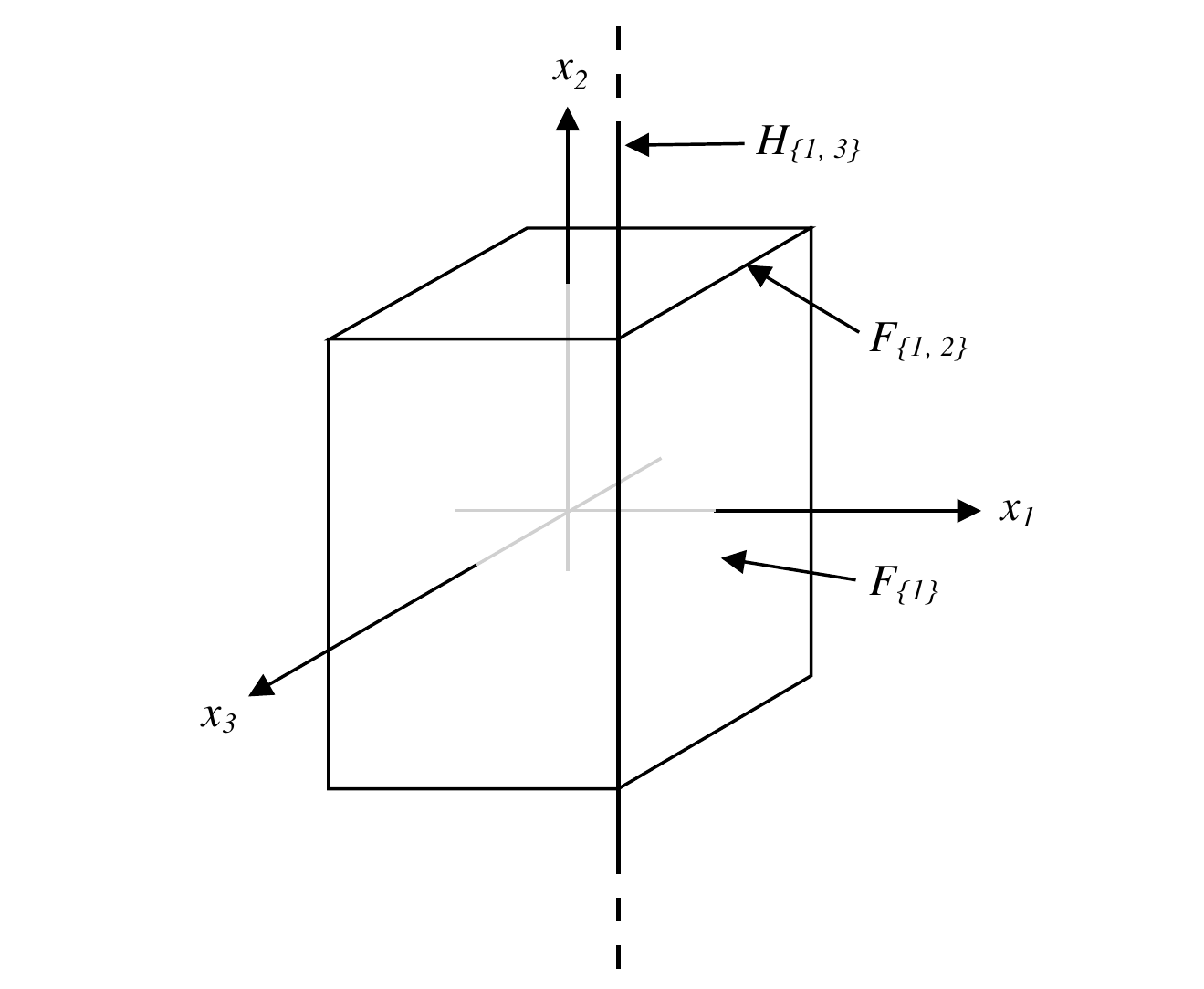}
\caption{Illustration of the notations $F_{C}$ and $H_{C}$.}\label{fig:cuboid}
\end{figure}

In the following we shall use some terminology from convex geometry (`face', `supporting hyperplane'), for definitions see Br\o ndsted \cite{brondsted}, for example.  The function $\Delta$ is differentiable, so its maximum in the set $S$, $\Delta^{\ast}$, is attained by either an extremal point of $S$ or a stationary point of the function on a face of $S$.

If one of the components of $x$ is negative, then changing its sign does not decrease $\Delta\left(x\right)$, so $\Delta^{\ast}$ must be attained by a point on one of the faces of $S$ in the set $\{F_{\mathcal{C}} | \mathcal{C} \subset \{1,...,d\}\}$, where $F_{\mathcal{C}}$ is the intersection of $S$ with the supporting hyperplane $H_{\mathcal{C}}$ defined by $x_i = +\sqrt{\lambda_i}, i \in \mathcal{C}$ (see Fig. \ref{fig:cuboid}).

The restriction of the function $\Delta$ to domain $H_{\mathcal{C}}$ has a single stationary point at
\begin{equation}
	x_j = \left(\dsum_{i\in\mathcal{C}}\sqrt{\lambda_i}\right)/\left(K+|\mathcal{C}|-d\right) \textrm{ for all } j \in \{1, ...,d\}\setminus \mathcal{C},
\end{equation}
where the value of $\Delta$ is equal to
\begin{equation}
	\Delta_{\mathcal{C}} = \frac{K}{\left(K^2-1\right)\left(K+c-d\right)}\left( \left(\dsum_{i \in \mathcal{C}} \sqrt{\lambda_i} \right)^2 - \left(K + c - d\right) \dsum_{i \in \mathcal{C}} \lambda_i \right).
\end{equation}
The restriction of $\Delta$ to face $F_{\mathcal{C}}$ contains a stationary point if and only if
\begin{equation}
	\left(\dsum_{i\in\mathcal{C}}\sqrt{\lambda_i}\right)/\left(K+|\mathcal{C}|-d\right) \leq \min \{\sqrt{\lambda_i} | i \in \{1,...,d\} \setminus \mathcal{C}\}.
\end{equation}

\begin{lemma}
	Suppose we have two subsets of $\{1,...,d\}$, $\mathcal{C}_1$ and $\mathcal{C}_2$, both of size $c$, which differ by one element such that $\mathcal{C}_1 = \mathcal{C}_0 \cup n_1$ and $\mathcal{C}_2 = \mathcal{C}_0 \cup n_2$. If $\Delta$ on $F_{\mathcal{C}_1}$ has a stationary point and $\lambda_{n_1} > \lambda_{n_2}$, then $\Delta$ on $F_{\mathcal{C}_2}$ also has a stationary point and $\Delta_{\mathcal{C}_2} \geq \Delta_{\mathcal{C}_1}$.
\end{lemma}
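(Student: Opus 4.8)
The plan is to reduce the whole statement to one scalar inequality extracted from the stationarity hypothesis for $\mathcal{C}_1$. First I would set $m := K + c - d$, noting that $m$ is a positive integer (it must be positive for a stationary point on $F_{\mathcal{C}_1}$ to exist at all), so $m \geq 1$ and the prefactor $K/((K^2-1)m)$ appearing in $\Delta_{\mathcal{C}}$ is strictly positive (recall $K \geq 2$). Writing $\Sigma_0 := \sum_{i \in \mathcal{C}_0}\sqrt{\lambda_i}$, $a := \sqrt{\lambda_{n_1}}$ and $b := \sqrt{\lambda_{n_2}}$, the hypothesis $\lambda_{n_1} > \lambda_{n_2}$ reads $a > b > 0$, and the relevant sums become $\sum_{i \in \mathcal{C}_1}\sqrt{\lambda_i} = \Sigma_0 + a$, $\sum_{i \in \mathcal{C}_2}\sqrt{\lambda_i} = \Sigma_0 + b$ (and similarly for the sums of $\lambda_i$, with $\lambda_{n_1} = a^2$, $\lambda_{n_2} = b^2$). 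Since $\mathcal{C}_1$ and $\mathcal{C}_2$ are distinct sets of equal size differing in one element, $n_2 \notin \mathcal{C}_1$, so the stationarity condition for $F_{\mathcal{C}_1}$ applied at the index $n_2$ yields the single inequality $(\Sigma_0 + a)/m \leq b$, i.e. $\Sigma_0 \leq mb - a$. This relation drives both conclusions.

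For the existence of a stationary point on $F_{\mathcal{C}_2}$ I would verify $(\Sigma_0 + b)/m \leq \sqrt{\lambda_i}$ for every $i \notin \mathcal{C}_2$. The complement of $\mathcal{C}_2$ is the complement of $\mathcal{C}_1$ with $n_2$ deleted and $n_1$ inserted. For the indices common to both complements the stationarity bound for $\mathcal{C}_1$ already gives $\sqrt{\lambda_i} \geq (\Sigma_0 + a)/m > (\Sigma_0 + b)/m$, using $a > b$; and for the one new index $n_1$ one has $\sqrt{\lambda_{n_1}} = a > b \geq (\Sigma_0 + a)/m > (\Sigma_0 + b)/m$, where the middle inequality is precisely the extracted relation $\Sigma_0 + a \leq mb$. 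Hence every required inequality holds and $F_{\mathcal{C}_2}$ carries a stationary point.

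For the value comparison, positivity of the prefactor reduces $\Delta_{\mathcal{C}_2} \geq \Delta_{\mathcal{C}_1}$ to $g(\mathcal{C}_2) \geq g(\mathcal{C}_1)$, where $g(\mathcal{C}) := \left(\sum_{i \in \mathcal{C}}\sqrt{\lambda_i}\right)^2 - m\sum_{i \in \mathcal{C}}\lambda_i$. A direct expansion, in which the shared block $\mathcal{C}_0$ cancels, gives $g(\mathcal{C}_2) - g(\mathcal{C}_1) = (a-b)\left[(m-1)(a+b) - 2\Sigma_0\right]$; substituting the bound $\Sigma_0 \leq mb - a$ into the bracket shows it is at least $(m+1)(a-b) \geq 0$, and combined with $a - b > 0$ this gives $g(\mathcal{C}_2) - g(\mathcal{C}_1) \geq (m+1)(a-b)^2 \geq 0$, as required.

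I do not expect any genuine obstacle here: the argument is entirely elementary algebra once the notation absorbs the common part $\mathcal{C}_0$. The only points needing care are the bookkeeping of which indices lie in the complements of $\mathcal{C}_1$ and $\mathcal{C}_2$ when checking stationarity, and applying the extracted inequality $\Sigma_0 \leq mb - a$ with the correct orientation in both halves; keeping track of the sign of $(a-b)$ is exactly what makes the direction of the conclusion $\Delta_{\mathcal{C}_2} \geq \Delta_{\mathcal{C}_1}$ come out right.
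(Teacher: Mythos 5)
Your proof is correct and follows essentially the same route as the paper: extract the inequality $\Sigma_0 + a \leq mb$ from the stationarity condition at the index $n_2 \notin \mathcal{C}_1$, expand and factor $\Delta_{\mathcal{C}_2} - \Delta_{\mathcal{C}_1}$ into $(a-b)\left[(m-1)(a+b) - 2\Sigma_0\right]$ times a positive prefactor, and bound the bracket using that inequality. The only difference is cosmetic: you spell out the existence claim (which the paper dismisses as trivial) by checking the complement of $\mathcal{C}_2$ index by index, and your final lower bound $(m+1)(a-b)^2 \geq 0$ is a cleaner form of the paper's closing estimate.
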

\begin{proof}
Part (i) is trivial. For part (ii), since $n_2$ is not in $C_1$, the fact that $\Delta$ on $F_{\mathcal{C}_1}$ has a stationary point implies that
\begin{equation}
	\dsum_{i\in\mathcal{C}_0}\sqrt{\lambda_i} \leq \left(K + c - d\right)\sqrt{\lambda_{n_2}} - \sqrt{\lambda_{n_1}},
\end{equation}
and therefore
\begin{align}
\left(K+c-d\right)&\left(K^2-1\right)\left(\Delta_{\mathcal{C}_2} - \Delta_{\mathcal{C}_1}\right)/K\\
 &= \left(K+c-d\right)\left(\lambda_{n_1} - \lambda_{n_2}\right) - \left(\left(\sqrt{\lambda_{n_1}}+\dsum_{i\in\mathcal{C}_0}\sqrt{\lambda_i}\right)^2 -\left(\sqrt{\lambda_{n_2}}+\dsum_{i\in\mathcal{C}_0}\sqrt{\lambda_i}\right)^2\right)\\
 &= \left(\sqrt{\lambda_{n_1}} - \sqrt{\lambda_{n_2}}\right) \left(\left(K + c - d - 1\right)\left(\sqrt{\lambda_{n_1}} + \sqrt{\lambda_{n_2}}\right) - 2\dsum_{i\in\mathcal{C}_0}\sqrt{\lambda_i} \right)\\
 &\geq \left( \sqrt{\lambda_{n_1}}/\sqrt{\lambda_{n_2}} - 1 \right) \left( \sqrt{\lambda_{n_1}} + \sqrt{\lambda_{n_2}} \right) \geq 0.
\end{align}

\end{proof}

If we assume, without loss of generality, that $\lambda_i = \lambda_i^{\uparrow}$, then as result of this lemma, if $\Delta^{\ast}$ is attained by a point in the relative interior of an $m$-dimensional face then it must also be attained by the $m$-dimensional face with $C = \{1,...,d-m\}$. Therefore the $\Delta^{\ast}$ is equal to $\Delta_{\mathcal{C^{\ast}}}$, where $C^{\ast} = \{1,...,c^{\ast}\}$ with $c^{\ast}$ being the smallest value (giving the largest dimensional face) such that ${\Delta}$ has a stationary point on $F_{C^{\ast}}$ or if no face has a stationary point then $c^{\ast} = d$ and the maximum occurs at the extremal point $x = \left(\sqrt{\lambda_1},...,\sqrt{\lambda_d}\right)$. The result follows.

\end{document}